\definecolor{darkgreen}{rgb}{0,0.4,0}
\definecolor{darkgreen}{rgb}{0,0.6,0}
\newtheorem{theorem}{Theorem}
\newtheorem{definition}{Definition}
\newtheorem{assumption}{Assumption}
\newtheorem{lemma}{Lemma}
\newtheorem{remark}{Remark}
\newtheorem{proposition}{Proposition}
\def\begquo{\begin{quote}}
\def\endquo{\end{quote}}
\def\begequarr{\begin{eqnarray}}
\def\endequarr{\end{eqnarray}}
\def\begequarrs{\begin{eqnarray*}}
\def\endequarrs{\end{eqnarray*}}
\def\begarr{\begin{array}}
\def\endarr{\end{array}}
\def\begequ{\begin{equation}}
\def\endequ{\end{equation}}
\def\begdes{\begin{description}}
\def\enddes{\end{description}}
\def\begenu{\begin{enumerate}}
\def\begite{\begin{itemize}}
\def\endite{\end{itemize}}
\def\endenu{\end{enumerate}}
\def\lef[{\left[\begin{array}}
\def\rig]{\end{array}\right]}
\def\begcen{\begin{center}}
\def\endcen{\end{center}}
\def\begrem{\begin{remark}\rm}
\def\endrem{\end{remark}}
\def\begdef{\begin{definition}}
\def\enddef{\end{definition}}
\def\begpro{\begin{proposition}}
\def\endpro{\end{proposition}}
\def\begfac{\begin{fact}}
\def\endfac{\end{fact}}
\def\begass{\begin{assumption}}
\def\endass{\end{assumption}}
\def\begsubequ{\begin{subequations}}
\def\endsubequ{\end{subequations}}
\def\begmat#1{\begin{bmatrix}#1\end{bmatrix}}
\def\cale{{\cal E}}
\def\calo{{\cal O}}
\def\calh{{\cal H}}
\def\cale{{\cal E}}
\def\phil{\phi^{\mathtt{L}}}
\def\bbm{{\mathbb{M}}}
\def\L2e{{\cal L}_{2e}}
\def\rea{\mathbb{R}}
\def\diag{\mbox{diag}}
\def\col{\mbox{col}}
\def\hal{{1 \over 2}}
\def\diag{\mbox{diag}}
\def\rank{\mbox{rank}\;}
\def\min{{\mbox{min}}}
\def\QED{\hfill $\square$}
\def\qed{\hfill $\triangleleft$}
\definecolor{lime}{HTML}{A6CE39}
\DeclareRobustCommand{\orcidicon}{
	\begin{tikzpicture}
	\draw[lime, fill=lime] (0,0) 
	circle [radius=0.16] 
	node[white] {{\fontfamily{qag}\selectfont \tiny ID}};
	\draw[white, fill=white] (-0.0625,0.095) 
	circle [radius=0.007];
	\end{tikzpicture}
	\hspace{-2mm}
}
\def\BibTeX{{\rm B\kern-.05em{\sc i\kern-.025em b}\kern-.08em
    T\kern-.1667em\lower.7ex\hbox{E}\kern-.125emX}}
\begin{document}

\title{On the Equivalence of Contraction and Koopman Approaches for Nonlinear Stability and Control}
\author{\textsf{Bowen Yi\orcidA{}, and Ian R. Manchester\orcidB{}} 
%
\thanks{This work was supported by Australian Research Council under Grant DP190102963. \emph{(Corresponding author: Bowen Yi)}}

\thanks{B. Yi is with Department of Electrical Engineering, Polytechnique Montreal, Montreal, QC, Canada. The work has been done when he was with Australian Centre for Robotics, The University of Sydney, Australia. (email: b.yi@outlook.com)}
\thanks{I. R. Manchester is with Australian Centre for Robotics and School of Aerospace, Mechanical and Mechatronic Engineering, The University of Sydney, Sydney, NSW 2006, Australia. (email: ian.manchester@sydney.edu.au)
}
}

\maketitle

\begin{abstract}
In this paper we prove new connections between two frameworks for analysis and control of nonlinear systems: the Koopman operator framework and contraction analysis. Each method, in different ways, provides exact and global analyses of nonlinear systems by way of linear systems theory. The main results of this paper show equivalence between contraction and Koopman approaches for a wide class of stability analysis and control design problems. In particular: stability or stablizability in the Koopman framework implies the existence of a contraction metric (resp. control contraction metric) for the nonlinear system. Further in certain cases the converse holds: contraction implies the existence of a set of observables with which stability can be verified via the Koopman framework. We provide results for the cases of autonomous and time-varying systems, as well as orbital stability of limit cycles. Furthermore, the converse claims are based on a novel relation between the Koopman method and construction of a Kazantzis-Kravaris-Luenberger observer. We also provide a byproduct of the main results, that is, a new  method to learn contraction metrics from trajectory data via linear system identification.
\end{abstract}

\begin{IEEEkeywords}
nonlinear system, contraction analysis, Koopman operator
\end{IEEEkeywords}

%
\section{Introduction}
\label{sec1}
%
Learning, analysis, and control of nonlinear dynamical systems are important problems but are significantly more challenging than their linear counterparts. In recent years two approaches -- based on the Koopman operator and contraction analysis -- have become popular and facilitated significant progress. Each, in different ways, draws on linear systems theory to analyze nonlinear systems \textit{exactly} and \textit{globally}: the Koopman approach works by mapping the system state to high (possibly infinite) dimensional spaces of \textit{observables} in which the dynamics are linear, whereas the contraction framework analyzes the system via an infinite family of local linearizations. The main purpose of this paper is to make precise the connection between these two approaches.

The Koopman operator describes the dynamics over the space of observables. Even for nonlinear dynamics, the Koopman operator itself is always linear, and this fact can be utilized for many tasks including, {\em e.g.}, stability analysis, controller design, and system identification; see \cite{MAUetal} for a recent review.

Unlike linearization via first-order approximation, the Koopman method provides a global insight to the system behavior via spectral analysis of the operator, which is built upon the pioneering work \cite{KOO}. Furthermore, for a given system if we are able to find a finite set of Koopman eigenvalues and eigenfunctions, then the system can be immersed into a finite-dimensional linear dynamics. Indeed, the Koopman method is closely connected to the Hartman-Grobman theorem in the dynamical systems theory \cite{HAR} and nonlinear manifold learning in the machine learning community \cite{LINZHA}, which, however, are studied in a local manner. In \cite{MAUMEZ}, the Koopman operator is utilized to provide novel global stability criteria for both hyperbolic equilibria and limit cycles. It was shown that, with a distinct-eigenvalue assumption, the proposed criterion is necessary and sufficient for the asymptotic stability of an equilibrium. However, for many systems the Koopman operator can admit repeated eigenvalues and we are still able to map the dynamics into a linear system globally \cite{LANMEZ,KVAREV}. Importantly, an attractive feature in Koopman representations is the possibility of data-driven and model-free analysis \cite{MEZBAN,MEZnd}. To this end, several techniques have been proposed recently to approximate the Koopman operator by solving least square problems, {\em e.g.}, dynamic mode decomposition (DMD) \cite{SCHSES} and extended (E)DMD \cite{KORMEZ}.

Another way to study nonlinear systems by means of linear methods is \textit{contraction analysis}. A contracting system has the property that all its trajectories converge to each other, and contraction analysis is based on the study of a nonlinear system in terms of its \emph{differential dynamics}, which is a linear time-varying (LTV) system along solutions, in this way with linear systems theory being applicable. The key insight is, roughly speaking, that if all solutions are locally stable then all solutions are globally stable. The basic ideas can be traced back to \cite{LEW}, but remarkable utility for problems in control and observer design was noticed much later \cite{LOHSLO}, while connections to Lyapunov methods were elaborated in \cite{FORSEP}. Contraction is not just a method for system analysis, but also a powerful constructive tool in many control and learning tasks, see \cite{MANSLO,TOBetal,SINetal,YIetal} for recent applications in controller design, system identification, online learning and observer design. 

In contrast to Koopman operator, which is a single infinite-dimensional linear operator, contraction is based on analysis of an infinite family (along all feasible solutions) of finite-dimensional LTV systems. Due to the similarity -- using linear systems theory to analyze nonlinear systems globally -- between Koopman and contraction methods, one of the motivations of the paper is to clarify their connections. Despite existing research on Koopman representations for systems with an asymptotically stable equilibrium, it is the authors' opinion that contraction provides a more natural and fundamental form of nonlinear stability to link to Koopman representations than asymptotic stability. The main reason is that, for linear systems, asymptotic stability is a quite strong property, equivalent to various types of stability, {\em e.g.}, exponential, incremental, and input-to-state stability; contraction shares these strong properties, whereas asymptotic stability of nonlinear systems is relatively weak and sometimes fails to inherit these properties from the lifted linear dynamics in the Koopman framework.

The main contributions of this paper are:%
\begin{itemize}
\item[1)] Showing the equivalence between contraction and Koopman approaches for stability analysis of nonlinear systems. The stability conditions for autonomous and time-varying nonlinear systems in the Koopman framework imply contraction of the systems; and the converse results also hold true. We give the results for both equilibria and limit cycles with explicit constructions.

\item[2)] Establishing the links between control contraction metric (CCM) -- a concept to characterize \emph{universal stabilizability} of nonlinear systems \cite{MANSLO} -- and stabilizability of the lifted linear systems in the Koopman framework.

\item[3)] The constructive solutions in our converse results are obtained from some novel interesting links between the Koopman operator method and Kazantzis-Kravaris-Luenberger (KKL) observers \cite{BER}. The relevant results are of interest on their own.

\item[4)] An useful byproduct of the proposed equivalence between Koopman and contraction approaches is that we may learn contraction metrics for stable nonlinear systems from pure trajectory data.
\end{itemize}

{\em Notation}. $|\cdot|$ represents the Euclidean norm. All mappings and functions are assumed smooth. Given a matrix $M(x,t)$ and a vector field $f(x)$ with proper dimensions, we define the directional derivative as $\partial_f M(x,t) := \sum_i {\partial M(x,t) \over \partial x_i}f_i(x)$ and the operator $\nabla f := ({\partial f \over \partial x})^\top$. Given a point $y \in \rea^n$, we define a ball $B_{\varepsilon}(y):=\{x\in \rea^n | |x-y|<\varepsilon\} $. For square matrices $A$ and $B$, the notion $A \prec B$ (or $A\preceq B$) indicates $(A-B)$ negative definite (or semidefinite). We use $\rea^{m\times m}_{\succ 0}$ (or $\rea^{m\times m}_{\succeq 0}$) to represent the set of $m\times m$ positive definite (or semidefinite) matrices. A positive definite matrix-valued function $A:\rea^n \to \rea^{m\times m}_{\succ 0}$ is called uniformly bounded if $a_1 I \preceq A(x) \preceq a_2 I, \; \forall x$ with some $a_2\ge a_1>0$. We use $\mathbb{C}$ to represent the complex plane, and $\mathbb{C}_{>0}$ ($\mathbb{C}_{<0}$) for the open right (left) half-plane. For an open set, $\mathtt{cl}(\cdot)$ denotes its closure. When clear from the context, the arguments of mappings and operators are omitted. 

The paper is organized as follows. In Section \ref{sec2} we recall some preliminaries on the Koopman operator and contraction analysis. Section \ref{sec3} starts with the equivalence between two methods for nonlinear autonomous systems, and the main results are extended to time-varying systems in Section \ref{sec4}. Then, it is followed by the relevant results of limit cycles in Section \ref{sec5}. Section \ref{sec6} presents some discussions including the relation between the CCM and the Koopman conditions for stabilization problems, as well as the link between the Koopman method and construction of a KKL observer. Three examples are given in Section \ref{sec7} to illustrate the main results, and the paper is wrapped up with some concluding remarks in Section \ref{sec8}. A preliminary version of this paper was presented at the conference \cite{YIetal2}, containing the main results of Section \ref{sec3}.

\section{Preliminaries}
\label{sec2}

We consider both nonlinear autonomous systems of the form
\begin{equation}
\label{NL:auto}
 \dot x = f(x)
\end{equation}
and more general nonlinear time-varying (NLTV) systems
\begin{equation}
\label{NLTV}
\dot x = f(x,t),
\end{equation}
with state $x\in \rea^n$ and the vector field $f\in C^2$. The system is assumed complete, {\em i.e.}, having a unique solution $X(x,t)$ for $t\in [0,+\infty)$ from the initial condition $x$ at $t=0$ for the autonomous system \eqref{NL:auto}, or a solution $X(t;x,s)$ for the time-varying system \eqref{NLTV}. In the latter, $X(t;x,s)$ represents the solution value at time $t$ from the initial condition $x(s)$ at time $s$. In the remainder of the paper, we denote the Jacobian of the function $f$ as $F(x) = {\partial f \over \partial x}(x)$, or $F(x,t) = {\partial f \over \partial x}(x,t)$ for the time-varying case.


\subsection{Contraction Analysis and Control Contraction Metric}
\label{sec22}

Let us recall the definition of contraction \cite{LOHSLO}, which is given here in its global forms for simplicity.

\begin{definition}\rm \label{def:contraction}
(\emph{contraction}) Given a system \eqref{NLTV}, if there exists a uniformly bounded metric $M(x,t) \in \rea^{n\times n}_{\succ 0}$ such that
\begin{equation}
\label{contraction}
\dot{M} + {\partial f \over \partial x}(x,t)^\top M + M{\partial f \over \partial x}(x,t) \preceq - \rho M,
 \quad \rho >0,
\end{equation}
then we call the system \eqref{NLTV} contracting, and $M(x,t)$ is a contraction metric. If the right-hand side of \eqref{contraction} is replaced by ``$\prec 0$'', the given system is asymptotically contracting. \qed
\end{definition}

Contraction, also known as incremental exponential stability,  utilizes the differential behavior of an infinitesimal displacement $\delta x$ along the flow $X(x,t)$ to characterize the asymptotic behavior among trajectories of the system \eqref{NL:auto}. To be precise, contraction implies that any two trajectories will exponentially converge to each other, {\em i.e.}, $ \forall (x_a,x_b) \in \rea^n \times \rea^n$
$$
|X(x_a,t) - X(x_b,t)|\le k_0|x_a- x_b| e^{-\rho t},
$$
for some $k_0>0$. For the asymptotic case, it becomes
$$
|X(x_a,t) - X(x_b,t)|\le \beta(|x_a- x_b|,t),
$$
in which $\beta(\cdot,\cdot)$ is a class $\mathcal{KL}$ function.

In \cite{MANSLO2018,MANSLO}, contraction analysis was extended to the problem of nonlinear control as a {\em constructive} method, a benefit of which is its convex representation enabling computational methods for synthesis. The key there is the control contraction metric (CCM), which is related to \emph{universal stabilizability} -- a property characterizing if one can find a feedback law to track any feasible trajectories asymptotically. We discuss the problem of stabilization in Section \ref{sec51}.

\subsection{Koopman Operator Methods}

The Koopman operator was originally studied for the autonomous system \eqref{NL:auto}. Denote $\calo$ as the space of all $C^2$-smooth output functions $\varphi: \rea^n \to \mathbb{C}$, usually called observables.

\begin{definition}\rm 
\label{def:koopman}
(\emph{Koopman operator}) The Koopman semi-group of operators $U^t: \calo \to \calo$ associated to the flow $X(x,t)$ for the system \eqref{NL:auto} is defined by
$$
U^t [\varphi] = \varphi \circ X(x,t), \quad \varphi \in \calo.
$$
For an observable $\phi_\lambda \in \calo \setminus\{0\}$, the associated Koopman eigenvalue is defined as the constant $\lambda \in \mathbb{C}$ satisfying
\begin{equation}
\label{K_eig}
U^t  [\phi_\lambda] = e^{\lambda t}\phi_\lambda,
\end{equation}
if it exists, and we call $\phi_\lambda$ as a Koopman eigenfunction.
\qed
\end{definition}

The operator $U^t[\phi]$ has, at a point $x_A$, the value which $\phi$ has at the point $X(x_A,t)$ into which $x_A$ flows after the lapse of the time $t$ \cite{KOO}; see Fig. \ref{fig:1}. It follows immediately from its definition that the Koopman operator $U^t[\cdot]$ is linear. 

\begin{figure}[htp!]
    \centering
    \includegraphics[width= 0.95\linewidth]{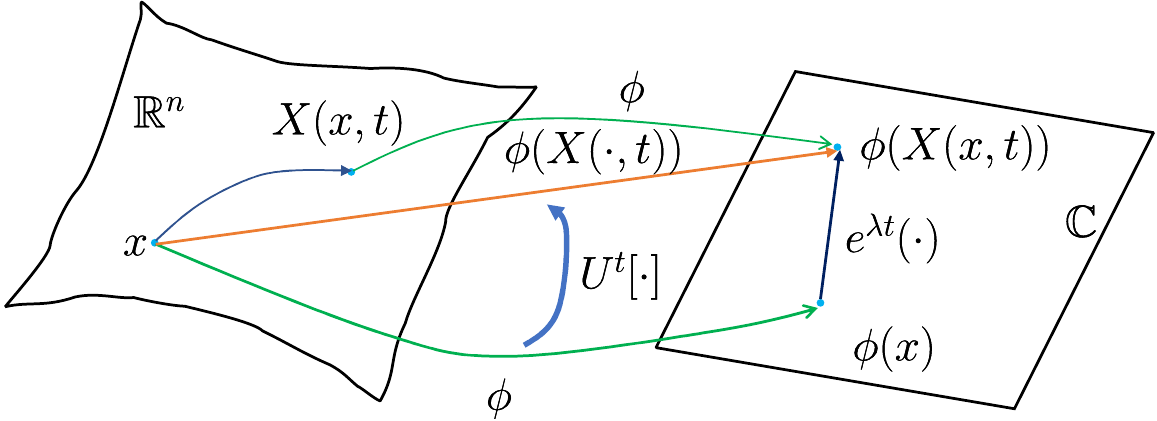}
    \caption{An illustration of the Koopman operator acting on an observable $\phi$.}
    \label{fig:1}
\end{figure}

Assuming smoothness of observables, the condition \eqref{K_eig} may be equivalently formulated as \cite{MAUetal}
\begin{equation}
\label{koopman_eig}
{\partial \phi_\lambda \over \partial x}(x) f(x)
 = \lambda \phi_\lambda.
\end{equation}
A main result in \cite{MAUMEZ} is that the Koopman operator can be used to verify the global asymptotic stability (GAS) of a hyperbolic equilibrium.
\begin{lemma}\rm
\label{lem:koopman} \cite[Proposition 2]{MAUMEZ}
 Consider the system \eqref{NL:auto} having a hyperbolic equilibrium $x_\star$, with all the eigenvalues of $F(x_\star)$ in $\mathbb{C}_{<0}$. If there exists a mapping 
$
\phi(x):=[\phi_{\lambda_1}(x), \ldots, \phi_{\lambda_n}(x)]^\top
$
such that\footnote{The following conditions are also necessary for the GAS of $x_\star$.}
\begin{enumerate}
\item ({\em distinct eigenvalues}) All the Koopman eigenvalues $\lambda_i$ are distinct and $\phi_{\lambda_i} \in C^1$ with $\nabla \phi_{\lambda_i}(x_\star) \neq 0$ for $i=1,\ldots, n$;
\item ({\em stability}) $\lambda_i \in \mathbb{C}_{<0}$ satisfy the PDEs \eqref{koopman_eig}, and are the eigenvalues of $F(x_\star)$.
\end{enumerate}
Then, the equilibrium $x_\star$ is GAS. \qed
\end{lemma}

The PDEs in the condition 2) immerse the autonomous system \eqref{NL:auto} into the linear time-invariant (LTI) dynamics 
\begequ
\label{lti}
\dot z = \Lambda z , \qquad  z(0) = \phi(x(0))
\endequ
with $\Lambda = \diag(\lambda_1,\ldots, \lambda_n)$ via the change of coordinate $x \mapsto z = \phi(x)$. By the condition $\lambda_i \in \mathbb{C}_{<0}$, this linear system is globally exponentially stable (GES) at the origin.



\section{Main Results for Autonomous Systems}
\label{sec3}

In this section, we start with the time-invariant system \eqref{NL:auto}, showing the equivalence between the Koopman and contraction approaches.

\subsection{Koopman Implies Contraction for Stability of Autonomous Systems}
\label{sec31}

In Lemma \ref{lem:koopman}, the second assumption requires distinct eigenvalues, requiring the LTI system \eqref{lti} \emph{diagonalizable}, which is somewhat restrictive\footnote{This could be relaxed by considering ``generalized eigenfunctions'' in \cite{MEZ17}.}. Before presenting our first result, we slightly extend Lemma \ref{lem:koopman} as follows.

\begin{proposition}\rm
\label{prop:koopman}
 Consider the system \eqref{NL:auto} with a hyperbolic equilibrium $x_\star$. If there exists a mapping 
$
\phi(x):=[\phi_{1}(x), \ldots, \phi_{N}(x)]^\top
$
with $(N-n) \in \mathbb{N}~\cup \{\infty\}$ such that
\begin{itemize}
\item[\bf C1] ({\em immersion}) For a finite $N$, $\Phi(x):={\partial \phi \over\partial x}(x)$ is full column rank; if $N$ is infinite, $\phi(x)$ is assumed to be rank-$n$ countably-infinite, {\em i.e.}, there are $\ell \ge n$ elements (denoted as $\phi_{k_j}$, $j=1,\ldots, \ell$) of $\phi$ such that $\rank\{\nabla \tilde \phi(x)\} = n$, with $\tilde\phi := \col(\phi_{k_1},\ldots, \phi_{k_\ell})$. 
\item[\bf C2] ({\em stability}) There exists exponentially stable $A$ verifying
\begin{equation}
\label{pde:new}
 {\partial \phi \over \partial x}(x) f(x) = A \phi(x). 
\end{equation}
\end{itemize} 
Then, the equilibrium $x_\star$ is GAS. 
\end{proposition}
\begin{proof}
We write the dynamics in the $z$-coordinate as
\begequ
\label{dot_z}
\dot z = Az, \quad z(0) = \phi(x(0)).
\endequ
From the assumption {\bf C2}, $z \to 0$ exponentially as $t\to\infty$. Since $x_\star$ is an equilibrium, we have $f(x_\star) =0$, and invoking \eqref{pde:new} yields $\phi(x_\star) = 0$. From the immersion condition {\bf C1}, if $N\in \mathbb{N}$, then $\phi: \rea^n \to \rea^N$ is \emph{locally} injective around $x_\star$, thus there is a class $\mathcal{K}$ function $\beta$ such that
$$
|x_a - x_b| \le \beta(|\phi(x_a) - \phi(x_b)|), \; \forall(x_a,x_b) \in B_\varepsilon(x_\star)^2
$$
with $\varepsilon>0$ sufficiently small. By substituting $x_b$ as $x_\star$ and $x_a$ as $X(x,t)$ with $x\in B_\varepsilon(x_\star)$, we conclude that $x_\star$ is locally attractive for the system \eqref{NL:auto}. In terms of hyperbolicity of the equilibrium $x_\star$, the Jacobian $F(x_\star)$ is Hurwitz.

Since $F(x_\star)$ is Hurwitz and the condition {\bf C1} holds for all $x \in \rea^n$, as well as invoking the fact that basin of attraction is open \cite[Proposition 5.44]{SAS}, the preimage $\phi^{-1}(0)$ only contains a single isolated equilibrium. Hence, the function $\phi$ is injective on the entire basin of attraction of $x_\star$. Due to the exponential stability of the $z$-dynamics, there always exists a moment $t_\star$ such that $x(t) \in B_\varepsilon(x_\star)$ for $t\ge t_\star$ from any initial condition. Then, for all $x(0) \in \rea^n$ we have
$$
\begin{aligned}
|x(t) - x_\star| & ~\le~ \beta(|e^{At}\phi(x(0)) - \phi(x_\star)|)\\
&
~\le~ \beta(|e^{At}\phi(x(0))|),
\quad \forall t\ge t_\star,
\end{aligned}
$$
which can be upper bounded by a $\cal KL$ function of $|x(0) - x_\star|$ and $t$, thus obtaining the GAS. If $N$ is infinite, we may use $\tilde \phi$ to do a similar analysis. 
\end{proof}

Note that the mapping $\phi$ in Proposition \ref{prop:koopman} is a little different from Koopman eigenfunctions, since it does not require the diagonalizability of $A$. Here we call $\phi$ the Koopman mapping, which is usually referred as semiconjugacy in the dynamical systems literature. The connection between Proposition \ref{prop:koopman} and Lemma \ref{lem:koopman} is established by the PDEs \eqref{koopman_eig} and \eqref{pde:new}. If the matrix $A$ is diagonalizable with $A = T^{-1} \Lambda T$ and $N \in \mathbb{N}$, then $T\phi$ is a set of Koopman eigenfunctions associated with the eigenvalues $\lambda_1, \ldots, \lambda_N$. Note that the mapping $\phi$ identified in {\bf C1}-{\bf C2} is not unique, in contrast to the principal eigenfunctions used in Lemma \ref{lem:koopman}.

\begin{remark}
\rm
The condition {\bf C1} implies the existence of a local inverse at each point, but not necessarily global, {\em i.e.}, it corresponds to an immersion rather than an embedding. The gap is that $\phi$ may be not proper in many cases due to $N \ge n$. In some cases we may want the stronger condition that $\phi$ has a left inverse $\phil$ with $\phil(\phi(x))=x$. Note that both trivially hold if $\phi$ contains the original system states, {\em i.e.}, $\phi(x)=\col(x, \phi'(x))$ for some $\phi'$. The existence of a Koopman mapping $\phi$ which satisfies the immersion condition is closely related to having a rectifiable dynamics \cite{KORMEZ2020,GOSPAL} such that it yields a local inverse.
\end{remark}

We are now in position to present the first main result of the paper.

\begin{theorem}
\label{thm:1}\rm
Assuming that there exists a $C^2$ Koopman mapping $\phi$ satisfying {\bf C1}-{\bf C2} and $\Phi^\top\Phi $ is uniformly bounded, then the system \eqref{NL:auto} is contracting with a contraction metric 
\begin{equation}
\label{M}
M(x) = \Phi(x)^\top P\Phi(x),
\end{equation}
where $P$ is the solution to $A^\top P + PA=-I$.\footnote{For infinite $N$, the Lyapunov equation becomes $\langle Az,Pz \rangle + \langle Pz, Az \rangle = -\langle z, z \rangle,\; z\in D(A)$ with $D(A)$ the domain of the infinitesimal generator $A$, and the inner product $
\langle \cdot, \cdot  \rangle
$ defined as \eqref{inner_product} for a compact set $x\in \mathcal{X}$.}
\end{theorem}
\begin{proof}
If there exists a Koopman mapping $\phi$ satisfying {\bf C1}-{\bf C2}, invoking that $A$ is exponentially stable, then there exists $P = P^\top \succ0$ satisfying the Lyapunov equation\footnote{This is also true for infinite-dimensional systems \cite[Thm. 5.1.3]{CURZWA}, and the following analysis is done \emph{mutatis mutandis} but omitted here.}
$$
A^\top P + P A = - I.
$$

From the assumptions, the mapping $\phi$ satisfies the PDE \eqref{pde:new}. Now, we calculate the partial derivative with respect to $x$ in each side, and denote $\Phi(x):= {\partial \phi \over \partial x}(x)$, yielding
\begequ
\label{dotPhi}
\begin{aligned}
	\partial_f \Phi(x) + {\partial \phi\over\partial x}(x){\partial f \over \partial x}(x) & ~=~ A\Phi(x)
	\\
	\implies  \hspace{1cm} \dot \Phi(x) + \Phi(x)F(x) &~ =~ A\Phi(x).
\end{aligned}
\endequ

From the Lyapunov equation, we have
\begequ
\label{phiphi}
\Phi^\top (A^\top P + P A) \Phi = - \Phi^\top \Phi \prec0,
\endequ
where we have used the full rank assumption of $\Phi(x)$ in {\bf C1}, and the fact that $\Phi(x)$ is a tall matrix. On the other hand,
$$
\begin{aligned}
	& \quad \Phi^\top (A^\top P + P A) \Phi \\
	\overset{\eqref{dotPhi}}=& \quad
	\Phi^\top P \dot \Phi + \dot\Phi^\top P \Phi + \Phi^\top P \Phi F + F^\top \Phi^\top P \Phi
	\\
	\overset{\eqref{M}}=& \quad 
	\dot M + F^\top M + M F
	\\
	\overset{\eqref{phiphi}}=& \quad 	-\Phi^\top\Phi.
\end{aligned}
$$
Applying
$$
\Phi^\top \Phi \succeq {1\over \lambda_{\tt max}\{P\}} \Phi^\top P\Phi
$$
with the largest eigenvalue $\lambda_{\tt max}\{P\}$, we then have
$$
\dot M + F^\top M + M F \prec - {1\over \lambda_{\tt max}\{P\}} M.
$$
It implies the contraction of the nonlinear system \eqref{NL:auto}.
\end{proof}


\begin{remark}\rm
The above proof boils down to the application of contraction of the lifted linear system $\dot z = Az $. Though it is well-known that incremental stability is intrinsic \cite{FORSEP}, the special point in the proof relies on the transformation $\phi: x\mapsto z$ being an immersion rather than a diffeomorphism. We underline that the immersion is guaranteed by the full rank condition {\bf C1}, which prevents $\phi$ from mapping another point $x'$ in a small neighborhood of the equilibrium $x_\star$ to the origin $z =0$ in the lifted coordinate. Furthermore, the full rank condition ensures that the contraction metric $M$ in \eqref{M} is positive definite; see Definition \ref{def:contraction}.
\end{remark}

\begin{remark}\rm 
It was shown in \cite{MAUMEZcdc} that the existence of a set of eigenfunctions $\phi_{\lambda_i}$ is related to a ``contracting metric'' 
\begequ
\label{mxy}
d(x_1,x_2) = \left(\sum_{i=1}^N |\phi_{\lambda_i}(x_1) - \phi_{\lambda_i}(x_2)|^p \right)^{1\over p}
\endequ
with integer $p\ge 1$, which follows the set stability framework to study incremental stability \cite{ANG}. In the past decade, there has been more attention on the \emph{differential framework} to analyze incremental stability, and our main results follow this line. It is clear that the obtained $M=\Phi^\top P \Phi$ is a Riemannian metric defined on tangent bundle, whereas \eqref{mxy} is defined in state space. The former enjoys attractive computational convenience in many settings \cite{MANSLO,MANSLOscl}.
\end{remark}

\begin{remark}\rm
The stability criterion in Proposition \ref{prop:koopman} requires finding a function $\phi$ and a matrix $A$ satisfying {\bf C1}-{\bf C2} simultaneously, making it non-trivial to verify. In \cite{MAUMEZ}, the authors provide numerical methods for the criterion in Lemma \ref{lem:koopman} in terms of Taylor expansion or Bernstein polynomials. 
\end{remark}

\subsection{A General Converse Result for Autonomous Systems}

In this subsection, we study the general converse results for nonlinear contracting autonomous systems, that is, the stability of infinitesimal generator of the Koopman operator $U^{t}$ for a class of observables.

To facilitate the general converse result, we consider in this section real-valued observables and the $L^2$ space, which is a Hilbert space with the usual inner product 
\begin{equation}
\label{inner_product}
\langle \phi_1, \phi_2  \rangle = \int_{x\in \mathcal{X}} \phi_1(x) \phi_2(x) dx,
\end{equation}
on a compact set $\mathcal{X}\subset \rea^n$. We have the following.

\begin{proposition}
\label{converse:general}\rm 
Suppose the system \eqref{NL:auto} is contracting with bounded trajectories in a compact set $\mathcal{X} \subset \rea^n$, and consider the set $\cale$ of real-valued observables $\phi$ parameterized as
\begin{equation}
\label{obs:prmt}
\phi(x) = \varphi(x) - \lim_{t\to\infty}\varphi(X(x_a,t))
\end{equation}
with any $C^2$-smooth function $\varphi$ and any point $x_a \in \mathcal{X}$. Then the infinitesimal generator of $U^{t}$, {\em i.e.}, $A_U \phi = \lim_{t\to 0}(U^{t}[\phi] -\phi)/t$ defines the linear infinite-dimensional system:
\begin{equation}\label{linsys_inf}
     {\partial \over \partial t} g(x,t) =  A_U g(x,t), \quad g(x,0) = \phi(x)
\end{equation}
and there exists a positive operator $P $ verifying its Lyapunov equation for exponential stability:
\begequ
\label{lyaeq_inf}
\langle A_U z,Pz \rangle + \langle Pz, A_U z \rangle = -\langle z, z \rangle
\endequ
for $z\in \cale$.
\end{proposition}

\begin{proof}
We first derive the representation \eqref{linsys_inf}. For the system \eqref{NL:auto}, we denote the output of the Koopman operator as\footnote{Here, $g(x,t)$ is thought as a function of \emph{isolated} coordinates $x$ and $t$ that evolve independently, see for example \cite[Remark 7.5.2]{LASMAC}.} 
$
g(x,t) := U^{t}[\phi(x)]
$
for given $t\ge  0$, with the initial condition constraint $g(x,0)= \phi(x)$. Invoking the smoothness assumption and according to \cite[Thm. 7.5.1]{LASMAC}, the infinitesimal generator is equivalently defined as
$$
\begin{aligned}
 {\partial \over \partial t} g(x,t)
 & = ~
  {\partial \over \partial t} U^{t}[\phi(x)]
=  ~
 {\partial \over \partial t} \phi(X(x,t))
 \\
 & = ~A_U g(x,t).
\end{aligned}
$$
Note that for any fixed $t\ge 0$, we have $g(x,t)\in \cale$.

Next we verify that $U^t$ is a strongly-continuous semigroup on the Hilbert space $L^2$.
This is a well-known property of the Koopman operator and indeed follows directly from  \cite[Def. 2.1.2]{CURZWA} and the continuity of $X(x,t)$ with respect to $t$.


We then note that any $C^2$ function on a compact set $\mathcal X$ is square-integrable, and hence is an element of the Hilbert space $L^2$. 
By assumption the system is contracting, and therefore for any $x$ in the compact set $\mathcal X$ we have $\|g(x,t)\| \le k e^{-\alpha t}\|g(x,0)\|$ for some $k, \alpha>0$, in which $\|\cdot\|$ is the function norm induced by the inner product $\langle \cdot, \cdot \rangle$ in the $L^2$ space. Therefore
\begin{equation}\label{square_int}
    \int_0^\infty \|U^tg(x,0)\|^2dt <\infty
\end{equation}
for any $g(x,0) \in \cale$.
By \cite[Lemma 5.1.2]{CURZWA} and \eqref{square_int}, we obtain the exponential stability of the operator $U^t$ which has the domain $\cale$:
\begin{equation}
    \|U^t\| \le k' e^{-\alpha' t}
\end{equation}
where $\|U^t\|$ denotes the operator induced norm on $L^2$, with some $k',\alpha'>0$. Therefore we have shown that the Koopman operator $U^t$ is a strongly-continuous and exponentially-stable semigroup defined on the observables set $\cale$. Following the same construction in \cite[Thm. 4.1.23]{CURZWA}, the observability gramian of the system $(A_U, I)$ defined for $z \in \cale$ by $P z = \int_0^\infty U^{t*} U^t z dt$ is well-posed due to \eqref{square_int}\footnote{$U^{t*}$ represents the adjoint operator of $U^t$; see \cite[Def. A.3.63]{CURZWA}.}, and the operator $P$ is a feasible solution to the Lyapunov equation \eqref{lyaeq_inf}.\footnote{Different from \cite[Thm. 4.1.23]{CURZWA}, there is no uniqueness guarantee for the operator $P$ in our case, since we do not study the completeness of $\cale$.}
%
\end{proof}

In the above we consider the infinite-dimensional case with $A_U$ a stable operator. This general result, however, is difficult to use in practice. We are more interested in a finite-dimensional $A_U$. Such a problem will be studied in the next subsection for autonomous systems.

\subsection{Contraction Implies Koopman for Stability of Autonomous Systems}
\label{sec32}

In this subsection, we prove the converse of Theorem \ref{thm:1}, {\em i.e.}, contraction is also sufficient for {\bf C1}-{\bf C2}. As a result, we may use the convex condition \eqref{contraction} to verify the conditions {\bf C1}-{\bf C2} indirectly, {\em e.g.}, by means of convex optimization and sum-of-squares \cite{MANSLO,MANSLOscl}. We have the following.

\begin{theorem}\rm 
\label{thm:inv1}
Consider the autonomous system \eqref{NL:auto}, which is contracting with the metric $M(x)$ in a compact set $\mathtt{cl}(\mathcal{X})$. Then, there always exists a $C^1$ Koopman mapping $\phi$ satisfying {\bf C1}-{\bf C2} with $N\in \mathbb{N}$.
\end{theorem}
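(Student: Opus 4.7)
The plan is to invert the direction of Theorem~\ref{thm:1}: starting from contraction I would construct a finite-dimensional Koopman mapping via a Poincar\'e normal-form / KKL argument anchored at the unique equilibrium of the system. First, I would extract an equilibrium: the pairwise exponential convergence that follows from \eqref{contraction}, combined with forward completeness, makes every orbit Cauchy and yields a common limit point $x_\star$ which is necessarily a fixed point of $f$. Translating so that $x_\star = 0$, evaluating \eqref{contraction} at the origin gives $F_\star^\top M_\star + M_\star F_\star \prec 0$ with $F_\star := F(0)$ and $M_\star := M(0) \succ 0$, hence $F_\star$ is Hurwitz.

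Next I would solve $\Phi(x) f(x) = A\phi(x)$ in two stages. Locally near $x_\star$, take $A$ to have spectrum matching $\sigma(F_\star)$ (e.g.\ the real Jordan form of $F_\star$) and construct $\phi$ as a formal power series with $\phi(0)=0$ and $\Phi_\star := \Phi(0)$ full column rank; in the absence of resonances this is Poincar\'e linearisation and gives $N=n$ directly, while in the resonant case one augments $\phi$ with finitely many additional observables corresponding to resonant monomials, keeping $N$ finite since only finitely many integer relations $\mu_i = \sum_j m_j \mu_j$ can hold among a fixed Hurwitz spectrum. This yields $\phi$ on a neighbourhood $U$ of $x_\star$ satisfying both {\bf C1$'$} and {\bf C2$'$} there. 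To globalise, I would use the semiflow identity $\phi(X(x,t)) = e^{At}\phi(x)$, which is an immediate consequence of the PDE, and set $\phi(x) := e^{-At}\phi(X(x,t))$ for any $t$ large enough that $X(x,t)\in U$; contraction guarantees such a $t$ exists for every $x$, and the definition is $t$-independent by the semiflow identity itself.

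For global {\bf C1$'$}, I would differentiate the semiflow identity in $x$ to obtain $\Phi(X(x,t))\,\partial_x X(x,t) = e^{At}\Phi(x)$. Since $\partial_x X(x,t)$ is the Jacobian of a diffeomorphism and $e^{At}$ is invertible, $\rank \Phi(x) = \rank \Phi(X(x,t))$; choosing $t$ so that $X(x,t) \in U$ delivers $\rank \Phi(x) = n$ as required. The main obstacle is the rigorous treatment of resonances and Jordan structure in $F_\star$: without additional spectral hypotheses the local power-series step must append generalised eigenfunctions and resonant corrections to produce a well-defined constant linear factor $A$, and this is precisely where the KKL observer viewpoint flagged in the abstract is needed --- it organises the augmentation into a finite-dimensional Luenberger-type filter whose steady-state image is the desired $\phi$, thereby certifying the finiteness of $N$.
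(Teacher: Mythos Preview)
Your overall architecture matches the paper's: extract the unique equilibrium from contraction, verify that $F_\star$ is Hurwitz by evaluating \eqref{contraction} at $x_\star$, build $\phi$ locally, then globalise via the flow identity $\phi(x)=e^{-At}\phi(X(x,t))$ and transport the full-rank condition along orbits. The globalisation step and the rank-propagation argument are essentially identical to what the paper does.

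The gap is in the local construction. You propose a Poincar\'e normal-form power-series linearisation, but the standing hypothesis is only $f\in\mathcal{C}^2$, so a formal series for $\phi$ is not even defined beyond second order, and even for analytic $f$ convergence would require Diophantine spectral conditions you have not imposed. Your fallback of enlarging $N$ with ``resonant monomials'' is also unnecessary: condition {\bf C2$'$} explicitly allows non-diagonal $A$, so resonances in $\sigma(F_\star)$ are irrelevant and $N=n$ always suffices. The paper sidesteps all of this by taking $A=F_\star$ itself (no Jordan or diagonal form), parameterising $\phi(x)=x+T(x)$, and observing that \eqref{pde:new} then reduces to the KKL observer PDE ${\partial T\over\partial x}f = AT + H$ with $H(x)=-f(x)+F_\star x$. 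This admits the explicit integral solution $T(x)=\int_0^{\infty} e^{F_\star s}H(X(x,s))\,ds$, well-defined because $F_\star$ is Hurwitz and trajectories are bounded---no series, no resonance bookkeeping. Since $T(x_\star)=0$ and $\nabla T(x_\star)=0$, one gets $\Phi(x_\star)=I_n$ and hence local full rank, after which your own globalisation argument finishes the job. So the KKL ingredient is not a patch for handling resonances as you suggest in your last paragraph; it \emph{is} the primary local construction, and it makes the spectral difficulties you worried about disappear.
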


\begin{proof}
For a system forward invariant in a closed set, if it is \emph{autonomous} and contracting, we can prove that the system admits a unique GES equilibrium $x_\star$, {\em i.e.} $f(x_\star) = 0$, using the well-known Banach fixed point theorem; see for example \cite{HEEetal}. (It holds true for the case that the set $\cal X$ is exactly $\rea^n$, which is both closed and open.) Since we can always assign the equilibrium by a linear coordinate change $x \mapsto (x - x_\star)$, without loss of generality, we assume the equilibrium $x_\star$ at the origin.

The system dynamics \eqref{NL:auto} can be written as
$$
\dot x = F_\star x - H(x), \quad F_\star := F(x_\star)
$$
with the high-order remainder term 
\begin{equation}
\label{H}
H(x) := -f(x) + F_\star x.
\end{equation}
From the contraction assumption, we have
$$
\partial_f M(x) + F(x)^\top M(x) + M(x)F(x) \le -\gamma M(x)
$$
with $\gamma>0$ along all feasible solutions, and the metric $M(x)$ uniformly bounded. After substituting the particular solution
$
X(x_\star,t)= x_\star, ~ \forall t\ge 0,
$
we obtain
$$
F_\star^\top M(x_\star) + M(x_\star) F_\star \le -\gamma M(x_\star),
$$
in which we have used $f(x_\star) = 0$, thus $\dot M(X(x_\star,t)) =0$. The above inequality implies that $F_\star$ is Hurwitz. Hence, $x_\star$ is a hyperbolic equilibrium. 

We consider the case $N=n$, and parameterize $\phi$ as 
$$
\phi(x)= x + T(x)
$$ 
with a smooth function $T$ to be searched for. Substituting \eqref{H} into the PDE \eqref{pde:new} and fixing $A=F_\star $, we have
\begin{equation}
\label{pde:kkl}
\begin{aligned}
 &	\quad \left[ I + {\partial T\over \partial x}(x) \right] f(x) = A(x + T(x))
 \\
 \implies & \quad F_\star x - H(x) + {\partial T\over \partial x}(x)f(x) = AT(x) + Ax
 \\
 \overset{A=F_\star}{\implies} & \quad
 {\partial T\over \partial x}(x) f(x) = AT(x) + H(x) .
\end{aligned}
\end{equation}
Note that we have already shown $A=F_\star$ is Hurwitz from the contraction assumption. It is interesting to observe that the last PDE in \eqref{pde:kkl} is identical to the one appearing in the KKL observer \cite{KAZKRA}. That is, finding a mapping $\phi$ verifying {\bf C2} is equivalent to solving the PDE in KKL observers with respect to the new mapping $T$.\footnote{We refer the interested reader to Section \ref{sec62} for a brief introduction to KKL observers.}

The remainder of the proof follows some constructive solutions to KKL observers \cite{KAZKRA}, in which it is assumed that the solution $X(x,t)$ does not blow-up in finite \emph{backward} time. When this additional assumption does not hold, we may still continue the analysis by considering the modified dynamics
\begin{equation}
\label{modified}
\dot x = \rho(x) f(x)
\end{equation}
with an arbitrary $C^\infty$ function $\rho: \rea^n \to\rea$ satisfying
$$
\rho(x) = \left\{
\begin{aligned}
1, \quad &\mbox{if~~} x\in \mathtt{cl}(\mathcal{X})
\\
0, \quad & \mbox{if ~~} x\notin \mathcal{X}'
\end{aligned}
\right.
$$
for some $\mathtt{cl}(\mathcal{X}) \subset \mathcal{X}' \subset \rea^n$. Such a modification makes the above backward assumption always hold. Note that we may select the compact set ${\tt cl}(\cal X)$ arbitrarily large. Let us denote the solution of the modified dynamics \eqref{modified} as $\breve{X}(x,t)$.

As shown in \cite{ANDPRA}, the PDE \eqref{pde:kkl} has a feasible solution
\begin{equation}
\label{solution:kkl}
T(X(x,t)) = e^{F_\star t} T(x) +\int_0^t e^{F_\star (t-s) } H(X(x,s))ds\
\end{equation}
if it is well-posed. Since $F_\star$ is Hurwitz, it has a solution when $t\to +\infty$, and we select
\begequ
\label{T}
T(x) = \int_0^{+\infty} \exp(F_\star s) H(\breve{ X }(x,-s))ds.
\endequ
Therefore, by selecting 
$$
\phi^0(x) = x + T(x)
$$ 
with the $C^1$ function $T$ defined in \eqref{T}, we get a feasible solution in $\mathtt{cl}(\mathcal{X})$ to the 
PDE \eqref{pde:new}. 

The remainder is to verify the full rank condition {\bf C1} by redesigning the function $\phi^0(x)$. Combining the facts 
$$
T(x_\star) = 0, \quad \nabla T(x_\star) =0
$$
and 
$$
{\partial \phi^0\over\partial x}(x) = I_n + {\partial T \over \partial x}(x),
$$
we conclude from the continuity that there always exists a small parameter $\varepsilon>0$ such that the mapping $\phi^0(x)$ is an \emph{injection} in the neighborhood $B_\varepsilon(x_\star)$ of the equilibrium $x_\star$. 
However, it may be not true for the entire $\rea^n$. Motivated by \cite[Thm. 2.3]{LANMEZ} and \cite{WAN}, we need to modify the mapping $\phi^0(x)$. Since we may, via the change of coordinate $z = \phi^0(x)$, transform the dynamics into
$
\dot z = F_\star z,
$
we now write its flow as $Z(z,t)= e^{At}z$ from the initial condition $z\in \rea^n$, and the dynamics is complete for $t\in (-\infty,+\infty)$. Note that, for the above construction, we have $Z(z,t) =  \phi^0(X(x,t))$ for any $t$ and $z=\phi^0(x)$. We may derive the solution of $Z(\cdot)$ at $t_x>0$ in the lifted linear $z$-coordinate as
\begin{equation}
\label{ztx}
    Z(z,t_x) = e^{At_x} z = e^{At_x} \phi^0(x).
\end{equation}
On the other hand, we may calculate its solution using the flow in the original $x$-coordinate, and then lifting it to the $z$-coordinate, {\em i.e.}
\begin{equation}
\label{ztx2}
Z(z,t_x) = \phi^0(X(x,t_x)).
\end{equation}
By combining \eqref{ztx} and \eqref{ztx2}, we have 
\begin{equation}
\label{ztx3}
\phi^0(x) = e^{-A t_x} \phi^0(X(x,t_x)).
\end{equation}
Hence, re-defining the Koopman mapping as
\begequ
\label{phi_redesign}
\phi(x)  :=  e^{-A t_x} \phi^0(X(x,t_x)),
\endequ
which still satisfies {\bf C2} for any $t_x>0$. Regarding {\bf C1}, we have 
$$
\Phi(x)
=
e^{-At_x}{\partial \phi^0 \over \partial x}(X(x,t_x)){\partial X \over \partial x}(x,t_x).
$$
For a very large $t_x>0$, $X(x,t_x)$ is in a small neighborhood of the origin, and thus both $\nabla \phi^0(X(x,t_x))$ and ${\partial X \over\partial x}(x,t_x)$ are full rank. Thus, the Jacobian $\Phi(x)$ is full rank globally.
\end{proof}

\begin{remark}
\rm 
It is shown in \cite{MAUMEZ} that the assumptions in Lemma \ref{lem:koopman} are also necessary for GAS of autonomous systems, similar to the converse result in Theorem \ref{thm:inv1}. However, the difference is clear that Lemma \ref{lem:koopman} requires a distinct-eigenvalue assumption, making it somewhat restrictive, and instead, we impose an immersion condition. As a consequence, the PDEs in Lemma \ref{lem:koopman} and {\bf C2} are slightly different depending on whether the system matrix $A$ is diagonalizable.
\end{remark}

\begin{remark}
In the above proof, we verify the condition {\bf C2} by studying the flows in both the $x$- and $z$-coordinates. Indeed, we may verify directly the PDE \eqref{pde:new} for the re-designed mapping $\phi$ in \eqref{phi_redesign}. To be precise, we have
$$
\begin{aligned}
{\partial \phi \over \partial x}(x)f(x) 
& =
e^{-A t_x}{\partial \phi^0 \over \partial x}(X(x,t_x)) {d X\over d t}(x,t_x)
\\
& =
e^{-A t_x}{\partial \phi^0 \over \partial x}(X(x,t_x)) f(X(x,t_x))
\\
& = 
e^{-A t_x} A \phi^0(X(x,t_x))
\\
& = 
 A e^{-A t_x} \phi^0(X(x,t_x))
\\
& = A \phi(x).
\end{aligned}
$$
\end{remark}

\begin{remark}\rm 
In both Theorem \ref{thm:1} and its converse claim -- Theorem \ref{thm:inv1} -- we consider the stronger condition of full rank of $\Phi(x)$ than those used in the previous results, showing the equivalence of the Koopman approach to contraction analysis, not just the stability of the origin. This fact will be further elaborated for time-varying systems.
\end{remark}

\begin{remark}\rm\label{remark:C}
It should be underscored that there is a gap on smoothness between Theorem \ref{thm:1} and its converse result. The standing assumption in Theorem \ref{thm:1} is $\phi\in C^2$; however, in the converse result in Theorem \ref{thm:inv1} our construction of $\phi$ in \eqref{phi_redesign} is only $C^1$. In \cite{KVAREV} it is shown that $\phi \in C^2$ defined on the entire domain of attraction exists by imposing the stronger assumption of $f\in C^\infty$ and the nonresonance of $F(x_\star)$. Note that the nonresonance condition is also used in \cite[Eq. (5)]{KAZKRA} to study the solution to the PDE \eqref{pde:kkl} in the context of KKL observers.
\end{remark}

\section{Main Results for Time-Varying Systems}
\label{sec4}

In this section, we extend the results in Section \ref{sec3} from autonomous systems to NLTV systems.

\subsection{The Koopman Operator for NLTV systems}
\label{sec41}
The Koopman operator was originally defined for autonomous systems, and its definition was extended to NLTV systems in \cite{MACetal} and controlled systems with inputs in \cite{PROetal}. Consider $\calo$ as the space of all $C^2$-smooth observables $\varphi: \rea^n \times \rea \to \mathbb{C}$. In consistent with \cite{MACetal}, we define the \emph{non-autonomous} Koopman operator for the NLTV system \eqref{NLTV} with two parameters $(s,t)$ as
\begequ
\label{Koopman_operator:TV}
U^{(t,s)}[\varphi(x,s)] := \varphi(X(t;x,s ),t),
\endequ
which is able to characterize the time-varying property of the system; see Fig. \ref{fig:2}. In the non-autonomous Koopman operator, we allow the observables being time-varying, which is consistent with Definition \ref{def:koopman}. Note that for such an extension, the Koopman operator is still linear.

\begin{figure}[h]
    \centering
    \includegraphics[width= 0.95\linewidth]{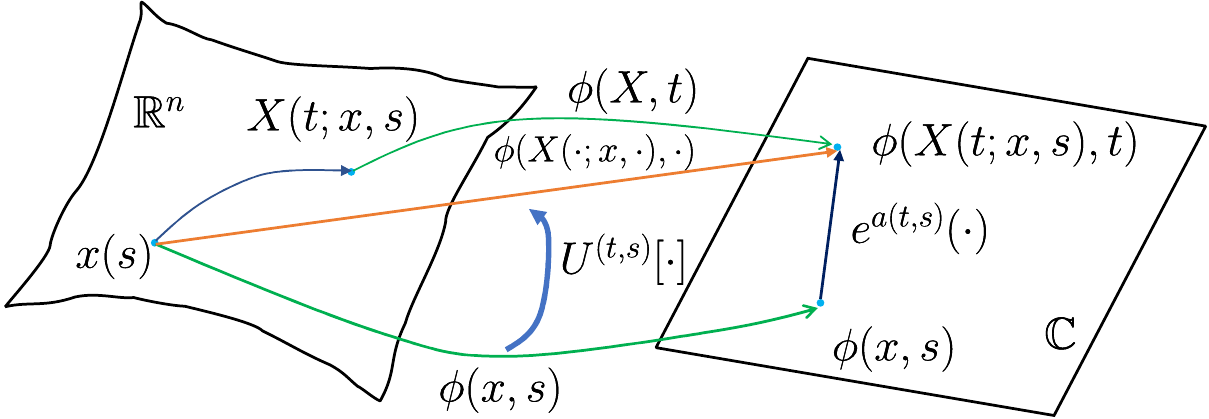}
    \caption{Non-autonomous Koopman operator acting on a time-varying observable $\phi(x,t)$}
    \label{fig:2}
\end{figure}

In a similar way, we define the Koopman eigenfunctions $\phi_\lambda(x,t)$ and Koopman eigenvalues $\lambda: \rea \to \mathbb{C}$ by
$$
a(t,s) = \int_s^t \lambda(\tau) d\tau,
$$
and then
\begequ
\label{eig_tv}
U^{(t,s)}[\phi_\lambda] = e^{a(t,s)} \phi_\lambda,
\endequ
which, for the autonomous systems, clearly degenerates into $U^{(t,s)}[\phi_\lambda] = e^{\lambda(t-s)}\phi_\lambda$. It is compatible with the autonomous definition selecting $s=0$.

The infinitesimal generator $A_U^s$ of the non-autonomous Koopman operator $U^{(t,s)}[\cdot]$ is defined as \cite{MACetal}
$$
\begin{aligned}
A_U^s \varphi(x,t) 
& ~:=~
\lim_{t \to s} {U^{(t,s)}[\varphi(x,s)] - \varphi(x,s) \over t -s }
\\
& ~\hspace{0.1cm}= ~
\lim_{t\to s}  {\varphi(X(t;x,s),t) - \varphi(x,s) \over t -s }
\\
& ~\hspace{0.1cm}=~
\dot{\aoverbrace[L1R]{\varphi(x,t)}}\Big|_{t=s},
\end{aligned}
$$
assuming that the observable $\varphi$ is smooth. Now considering the eigenfunction $\phi_\lambda(x,t)$ and invoking \eqref{eig_tv}, we have
$$
\begin{aligned}
\dot{\aoverbrace[L1R]{\phi_\lambda(x,t)}} 
&~ =~
 {\partial \phi_\lambda \over \partial t} +
 {\partial \phi_\lambda \over \partial x}f(x,t)
 \\
&~ =~
\lim_{t\to s} {e^{a(t,s)} - 1 \over t -s } \lim_{t\to s}\phi_\lambda(t,s)
\\
&~ = ~
{d a(s,t) \over dt}\Big|_{s=t} \phi_\lambda(x,t)
\\
&~ = ~
\lambda(t)\phi_\lambda(x,t).
\end{aligned}
$$
Equivalently, the Koopman eigenfunction $\phi_\lambda \in \calo $ and the eigenvalue $\lambda\in \mathbb{C}$ satisfies the PDE
\begin{equation}
\label{koopman_eig_TV}
 {\partial \phi_\lambda \over \partial t} + 
 {\partial \phi_\lambda \over \partial x   }f(x,t)
 = \lambda \phi_\lambda.
\end{equation}

If we are able to find a family of eigenfunctions $\phi(x,t):= [\phi_{\lambda_1}, \ldots, \phi_{\lambda_N}]^\top$ with $N \ge n$, then the given NLTV system \eqref{NLTV} can be immersed into an LTV system\footnote{It should be underlined that, in many cases, we are still able to find \emph{constant} Koopman eigenfunctions for some classes of time-varying systems \eqref{NLTV}. As a result, the resulting lifted linear systems are time-invariant. }
$$
\dot z = \Lambda(t) z
$$
with $z= \phi(x,t)$ and $\Lambda(t):= \diag(\lambda_1(t),\ldots, \lambda_N(t))$.

Following Subsection \ref{sec31}, we define the (time-varying) Koopman mapping $\phi$ to remove the diagonalizable constraint, which is required to satisfy the PDE
\begin{equation}
\label{pde2}
{\partial \phi \over \partial t}(x,t) + {\partial \phi \over \partial x}(x,t) f(x,t) = A(t)\phi(x,t)
\end{equation}
with a time-varying matrix $A(t)$ Lyapunov stable, {\em i.e.}, for any positive definite matrix $Q$ there exists a positive definite matrix $P(t)$ satisfying \cite{TOMBAN}
\begin{equation}
\label{Lya_eq_tv}
\dot P(t) + A^\top(t)P(t) + P(t) A(t)  + Q = 0.
\end{equation}
If $P(t)$ is uniformly bounded, the lifted LTV system
\begequ
\label{ltv_z}
\dot z = A(t) z, \quad z(0)= \phi(x(0),0)
\endequ
is uniformly exponentially stable (UES).

\subsection{Koopman Implies Contraction for NLTV Systems}
\label{sec42}

In this subsection, we show that a variant of Theorem \ref{thm:1} holds for time-varying systems. 

\begin{theorem}
\label{prop:tv}\rm
Consider the NLTV system \eqref{NLTV}, and assume the existence of a $C^2$-Koopman mapping $\phi(x,t) \in \rea^N$ satisfying 
\begin{itemize}
\item[\bf C1$'$] ({\em immersion}) For $N \in \mathbb{N}$,
\begequ
\label{rank_phi}
\rank \left\{{\partial \phi(x,t)\over \partial x} \right\} = n,
\quad \forall x\in \rea^n, \; t \in \rea_{\ge 0}.
\endequ
\item[\bf C2$'$] ({\em stability}) The existence of a UES and bounded $A(t)$ verifying the PDE \eqref{pde2}.
\end{itemize}
If $[{\partial \phi(x,t)\over \partial x} ]^\top {\partial \phi(x,t)\over \partial x}$ is uniformly bounded, then the system is contracting. 
\end{theorem}
\begin{proof}
We define the partial derivative ${\Phi}(x,t) := {\partial \phi \over \partial x}(x,t)$. For the PDE \eqref{pde2}, by calculating the partial derivative with respect to $x$, we obtain
$$
{\partial^2 \phi \over \partial x \partial t}  + \partial_f \Phi +{\partial \phi \over \partial x} {\partial f \over \partial x} = A{\partial \phi \over \partial x}.
$$
Since $\phi\in C^2$, we have
$
{\partial ^2 \phi \over \partial x\partial t}(x,t) = {\partial^2 \phi \over \partial t\partial x}(x,t)
$
then
$$
\begin{aligned}
 & \quad {\partial^2 \phi \over \partial t \partial x}  + \partial_f \Phi +{\partial \phi \over \partial x} {\partial f \over \partial x} =  A{\partial \phi \over \partial x}
 \\
 \iff & \quad
 {\partial \Phi \over \partial t} + \partial_f \Phi + \Phi F= A\Phi
 \\
 \iff & \quad   \dot{\aoverbrace[L1R]{\Phi(x,t)}} + \Phi(x,t)F(x,t) =  A(t) \Phi(x,t).
\end{aligned}
$$

From the UES of $A(t)$, there exists $P(t) \succ 0$ satisfying the differential Lyapunov inequality \cite[Thm. 7.8]{RUG}
\begin{equation}
\label{lya_ineq}
\dot P(t) + A^\top(t) P(t) + P(t) A(t) \preceq - kI_n
\end{equation}
and the uniform boundedness
$$
p_1 I_n \preceq P(t) \preceq p_2 I_n, \quad \forall t\ge 0
$$
for some $k>0$ and $p_2>p_1 >0$.

Choose the contraction metric 
$$
M(x,t) = \Phi(x,t)^\top  P(t) \Phi(x,t),
$$
in which $P(t)\succ 0$ is the solution to the differential Lyapunov inequality \eqref{lya_ineq}. Then, it yields
$$
\begin{aligned}
& ~\dot M + F^\top M + MF  
\\
= & ~
\dot\Phi^\top P \Phi + \Phi^\top P \dot \Phi + \Phi^\top \dot P \Phi  + F^\top \Phi^\top P \Phi + \Phi^\top P \Phi F
\\
= & ~
\Phi^\top(A^\top P + P A + \dot P ) \Phi
\\
\preceq & ~
- kI_n.
\end{aligned}
$$
Since both $P(t)$ and $\Phi^\top \Phi$ are uniformly bounded, the matrix $M(x,t)$ defined above is qualified as a uniformly bounded contraction metric. It completes the proof.
\end{proof}

The above result is not surprising, since contraction analysis was originally tailored for time-varying systems. From Theorems \ref{thm:1} and \ref{prop:tv}, the Koopman-based stability analysis may be roughly viewed as an alternative formulation of contraction analysis.

\subsection{Contraction Implies Koopman  for NLTV Systems}
\label{44}

In this subsection, we give two converse results on how to derive Koopman mappings for a contracting system. We first consider the case to embed a contracting NLTV system into an LTV system.

\begin{theorem}
\label{thm:conv_ltv}\rm 
Consider the NLTV system \eqref{NLTV} with an equilibrium $x_\star$ at the origin, and assume that the system is contracting with a smooth \emph{time-varying} metric $M(x,t)$ in a compact set $\mathtt{cl}(\mathcal{X})$. Then, there always exists a Koopman mapping $\phi(x,t)$ satisfying {\bf C1$'$} and {\bf C2$'$}. Further, the claims still hold true if the systems state is bounded but removing the assumption of the existence of an equilibrium $x_\star$.
\end{theorem}


\begin{proof}
From the contraction of the system \eqref{NLTV}, we know the UES of the LTV system
$$
\delta \dot x = {\partial f \over \partial x}(X(t;x,s),t) \delta x, \quad t\ge s
$$
with the infinitesimal displacement $\delta x \in T\rea^n$, any $s\ge 0$ and 
$$
\dot M + {\partial f \over \partial x}^\top M + M{\partial f \over \partial x} \preceq - \rho M, \quad \rho >0.
$$
As a particular solution from the equilibrium $x_\star$, we conclude the UES of 
$$
A(t) :={\partial f\over \partial x}(x_\star,t),
$$
{\em i.e.}, satisfying the differential Lyapunov inequality \eqref{lya_ineq} with
$$
P(t) = M(X(t;x_\star,0),t) = M(x_\star,t).
$$

Now let us parameterize the Koopman mapping as
$$
\phi(x,t) = x + T(x,t),
$$
with a $C^1$-mapping $T: \rea^n \to \rea_{\ge 0}$ to search, and define
$$
H(x,t) = - f(x,t) + A(t).
$$

Then, the PDE \eqref{pde2} in the non-autonomous Koopman approach becomes
\begin{equation}
\label{kklpde-tv}
{\partial T\over \partial x}(x,t) + {\partial T \over \partial x}(x,t) f(x,t) = A(t)T(x,t) + H(x,t),
\end{equation}
with a stable $A(t)$ from $s$ to $t$. We denote the associated \emph{state transition matrix} of $A(t)$ as $\Omega(t,s)$, and we have
$$
z(t) = \Omega(t,s) z(s) 
$$
for the LTV system \eqref{ltv_z}. It is a function of $A(t)$, but most time it is impossible to write its explicit formula, and we may numerically obtain from
$$
\dot \Phi_A(t)= A(t) \Phi_A(t), \quad \Phi_A(0) = I_n
$$
with $\Omega(t,s)= \Phi_A(t)\Phi_{A}(s)^{-1}$.

The next step is to show the mapping
\begin{equation}
\label{T0_tv}
T^0(x,t) = \int_0^t\Omega(t,s)H( X(s;x,t),s) ds
\end{equation}
is a feasible solution to the PDE \eqref{kklpde-tv} for $x\in \mathcal{X}$.\footnote{Here, we assume that for the NLTV system \eqref{NLTV} there is no backward finite-time escaping. Otherwise, we may adopt the modification in \eqref{modified}.} To this end, motivated by \cite{BERAND}, we note that for any $x\in \rea^n$, $t\in[0,+\infty)$ and any $\tau$, we have
$$
X(s;X(t+\tau,x,t),t+\tau) = X(s;x,t),
$$
and then
\begequ
\label{equ_tv}
\begin{aligned}
&~	T^0(X(t+\tau;x,t), t+\tau)
\\
 =&~
\int_0^{t+\tau}\Omega(t+\tau,s) H(X(s;x,t), s) ds
\\
 = & ~
\int_0^{t+\tau}\Omega(t+\tau,t) \Omega(t,s) H(X(s;x,t), s) ds
\\
 = &~ 
\Omega(t+\tau,t)  \int_0^{t+\tau}\Omega(t,s) H(X(s;x,t), s) ds
\\
 = & ~
\Omega(t+\tau,t)  T^0(x,t)
\\
& ~~ +
\Omega(t+\tau,t)  \int_t^{t+\tau}\Omega(t,s) H(X(s;x,t), s) ds.
\end{aligned}
\endequ
Invoking the $ C^1$-smoothness, we have
$$
\lim_{\tau \to 0} {T^0(X(t+\tau;x,t), t+\tau)\over \tau}  = {\partial T^0 \over \partial t}(x,t) + {\partial T^0 \over \partial x}(x,t)f(x,t)
$$
and
$$\small
\begin{aligned}
&\lim_{\tau \to 0} {\Omega(t+\tau,t)  T^0(x,t) \over \tau}  = {d \Omega \over dt}(t,s)\Big|_{s=t} T^0(x,t)=A(t)T^0(x,t)
\\
&\lim_{\tau \to 0} {1\over \tau} \Omega(t+\tau,t)  \int_t^{t+\tau}\Omega(t,s) H(X(s;x,t), s) ds  =
H(x,t).
\end{aligned}
$$
Combining \eqref{equ_tv}, we verify that $T^0(x,t)$ defined by \eqref{T0_tv} is a feasible solution to the PDE \eqref{kklpde-tv}, and then 
$$
\phi^0(x,t) = x+ T^0(x,t)
$$
is a $C^1$-solution to \eqref{pde2} in $\mathcal{X}$.

The next step is to verify the rank condition \eqref{rank_phi}. We have
$$
{\partial \phi^0 \over \partial x}(x,t) = I_n + {\partial T^0 \over \partial x}(x,t),
$$
with 
\begin{equation}
\label{partial_H2}
H(x_\star, t) = - f(x_\star,t) + A x_\star =0, \quad {\partial H \over \partial x}(x_\star,t)=0,
\end{equation}
and then $T(x_\star,t) = 0$ and $\nabla T(x_\star,t)=0$. It implies the existence of a small parameter $\varepsilon>0$ such that the mapping $\phi^0(x,t)$ is an injection (for any fixed $t$) in $B_\varepsilon(x_\star)$. Similarly to the proof of Theorem \ref{thm:inv1}, we consider the flow in the $z$-coordinate as
$$
z(t_x) = \Omega(t_x,t) z(t) = \Omega(t_x,t) \phi^0(X(t;x,s),t),
$$
as well as the flow in the $x$-coordinate
$$
z(t_x) = \phi^0(X(t_x;x,s),t_x).
$$
Hence, choosing $s=t$ and using $X(t;x,s)|_{s=t}=x$ and $\Omega(t_x,t)\Omega(t,t_x) = I$, we modify the transformation as
$$
\phi(x,t) = \Omega(t,t_x) \phi^0(X(t_x; x, t), t_x),
$$
which satisfies both the PDE \eqref{pde2} and the immersion condition \eqref{rank_phi} with sufficiently large $t_x>0$.

In the above proof, we have assumed the existence of an equilibrium $x_\star$. Now, let us consider the case without equilibria. From the assumption of contraction, we know that all trajectories will converge to each other ultimately. Selecting a particular solution $x_r(t)$, we have
\begequ
\label{xr}
|x(t) - x_r(t)|\le k_0 |x(0) - x_r(0)| e^{-\rho t}, \quad x(0)\in \rea^n 
\endequ
with some $k_0, \rho>0$. We define an error state $\chi:= x - x_r$, the dynamics of which is given by
\begequ
\label{dotchi}
\dot \chi = f_\chi(\chi,t)
\endequ
with
$$
f_\chi(\chi,t) = f(\chi+ x_r(t),t) - f(x_r(t),t).
$$
Clearly, this system is GES at the origin due to \eqref{xr}. Then, we may apply the result with an equilibrium to the dynamics \eqref{dotchi}. The associated Koopman transformation is given by $\phi(x-x_r(t),t)$, where $\phi(\cdot,t)$ follows the construction above.
\end{proof}

Sometimes there is an interesting special case to identify \emph{constant} Koopman eigenvalues, or equivalently to lift the nonlinear model into an LTI system. It is always possible to do that, but imposing the immersion condition {\bf C1} requires that the Jacobian of the vector field should not change significantly over time. We have the following.

\begin{proposition}\rm
\label{prop:inv2}
Consider the NLTV system \eqref{NLTV} with an equilibrium $x_\star$ at the origin, and assuming the system is contracting with a metric $M(x)$ in the set $\mathtt{cl}(\mathcal{X})$. There always exists a Koopman mapping $\phi(x,t)$ satisfying {\bf C1$'$} but with a constant Hurwitz matrix $A\in \rea^{n\times n}$. Further, if 
\begin{equation}
\label{jacobian_f}
\left|{\partial f\over \partial x}(0,0)- {\partial f \over \partial x}(0,t) \right| \le k, \quad \forall t\ge 0,
\end{equation}
for some small $k>0$, then the condition {\bf C2$'$} also holds.
\end{proposition}
\begin{proof}
From the contraction assumption, we have
$$
\dot M(x) + F^\top(x,t) M(x) + M(x) F(x,t) \le - \gamma M(x)
$$
for some $\gamma >0$, with $F(x,t)={\partial f(x,t) \over \partial x}$. Since $x_\star =0$ is an equilibrium, we have $\dot M(X(x_\star,t)) = 0$, thus
$$
F^\top (x_\star, t) M(x_\star) + M(x_\star) F(x_\star,t) \le - \gamma M(x_\star).
$$
It implies that the Jacobian $F(x_\star,t)$ is Hurwitz at any fixed time. Now we select the Hurwitz matrix
$
A : = F(x_\star, 0),
$
which may work in our analysis due to the assumption \eqref{jacobian_f}. Now we consider $N=n$ and parameterize $\phi$ as
\begequ
\label{parameterisation2}
\phi(x,t) = x + T(x,t)
\endequ
and define
$
H(x,t):= -f(x,t) + Ax.
$ 

The existence of a Koopman mapping $\phi$ is now equivalent to the existence of the solution to the PDE \eqref{pde2} which satisfies the immersion condition. With the above parameterization \eqref{parameterisation2}, the equation \eqref{pde2} becomes
\begin{equation}
\label{pde3}
{\partial T \over \partial t}(x,t) + {\partial T\over \partial x}(x,t)f(x,t) = A T(x,t) + H(x,t).
\end{equation}
It is interesting to figure out that the PDE \eqref{pde3} is exactly the same one in the KKL observer for \emph{non-autonomous} systems \cite{BER,BERAND}. Mimicking the operation done in \eqref{modified}, we may modify the NLTV system as $\dot x = \rho(x) f(x,t)$ with its solution denoted as $\breve{X}(s;x,t)$. That is, $\breve X(s;x,t)$ represents the solution value at time $s$ from the initial condition $x(t)$ at time $t$ for the modified dynamics. Then, the PDE \eqref{pde2} has a feasible solution 
\begin{equation}
\label{T(x,t)}
T^0(x,t) = \int_0^t e^{A(t-s)} H(\breve X(s;x,t),s)ds,
\end{equation}
which is well-posed from the backward complete property. Therefore,
$
\phi^0(x,t) = x+ T^0(x,t)
$
is a feasible solution to \eqref{pde2} in $\mathtt{cl}(\mathcal{X})$. 

The remainder of the proof is to verify \eqref{rank_phi} under the additional assumption \eqref{jacobian_f}. The Jacobian of $\phi^0(x,t)$ is
$$
{\partial \phi^0 \over \partial x}(x,t)= I_n + {\partial T^0 \over \partial x}(x,t).
$$
Noting that
\begequ
\label{partial_H}
H(x_\star,t) = -f(x_\star,t) + A x_\star =0, ~
{\partial H \over \partial x}(x_\star,0) =0
\endequ
it yields that $\phi^0(x,t)$ is an immersion in $B_\varepsilon(x_\star)$ for small $\varepsilon>0$, if $k>0$ is sufficiently small. Following the similar proof of Theorem \ref{thm:inv1}, we redesign the mapping $\phi^0(x,t)$ as
$$
\phi(x,t) = e^{-A(t_x -t)} \phi^0(X(t_x;x,t), t_x)
$$
by choosing a sufficiently large $t_x>0$. Hence, we have verified the immersion condition.
\end{proof}

\begin{remark}\rm 
The additional assumption \eqref{jacobian_f} is relatively mild. A particular case is the system dynamics being in the form of $\dot x = f_1(x)+f_2(t)$, which satisfies \eqref{jacobian_f} automatically. In Proposition \ref{prop:inv2} and Theorem \ref{thm:conv_ltv}, we lift the contracting NLTV system \eqref{NLTV} into an LTI and an LTV system, respectively. In the latter, we remove the Jacobian requirement \eqref{jacobian_f} of the vector field $f(x,t)$. The key underlying reason relies on that in the latter we have ${\partial H \over \partial x}(x_\star,t) =0,~\forall t$, but for the former we only have ${\partial H \over \partial x}(x_\star,0) =0$ only at the initial moment without uniformity with respect to time.
\end{remark}

%


\section{Extension to Limit Cycles}
\label{sec5}

In this section, we show the equivalence between the transverse contraction and Koopman approaches in orbital stability analysis. Indeed, periodic behavior plays important roles in many engineering and biological applications \cite{YIORT,MANetal,SHIetal}. Given an autonomous system \eqref{NL:auto}, a (non-trivial) periodic solution $X$ is one for which there exists $T>0$ such that 
$$
X (t) = X(t+T), \quad t\ge 0,
$$
and the orbit is the set 
$$
\gamma:= \{x \in \rea^n | x =X(t), \; 0 \le t \le T\}.
$$

Analogous to the results for equilibria, in \cite[Proposition 3]{MAUMEZ} the authors propose a Koopman operator-based stability criterion for limit cycles. Let us recall the results in \cite{MAUMEZ}, and we modify as follows. Without loss of generality, we assume the system invariant in the set $\cal X$.

\begin{lemma}\rm
\label{lem:trans_koopman}
 Consider the system \eqref{NL:auto} with a hyperbolic limit cycle $\gamma \subset \rea^n$. If there exists a Koopman mapping 
$
\phi(x):=[\phi_{\lambda_1}(x), \ldots, \phi_{\lambda_{n-1}}(x)]^\top
\in \calo^{n-1}
$
satisfying
\begin{itemize}
\item[\bf T1] ({\em immersion}) $\Phi(x)= {\partial \phi \over \partial x}(x)$ is full row rank uniformly in the set $\gamma \subset \mathcal{X} \in \rea^n$, and $\phi(x)|_{x\in \gamma} = 0$;
\item[\bf T2] ({\em stability}) The existence of a Hurwitz matrix $A \in \rea^{(n-1)\times (n-1)}$ verifying the PDE \eqref{pde:new}.
\end{itemize}  
Then, the system is orbitally asymptotically stable in $\mathcal{X}$ with respect to the limit cycle $\gamma$. \QED
\end{lemma}

Due to the topological constraint, the set $\mathcal{X}$ cannot be the entire space of $\rea^n$. The best result we may get in the Euclidean space is that the basin of attraction is almost global, except a zero Lebesgue measure set.

Meanwhile, \cite[Thm. 3]{MANSLOscl} provides a transverse contraction criterion for existence and stability of a limit cycle. See Appendix for the definition of transverse contraction. 

\begin{lemma}
\rm \label{lem:trans_contraction} \cite{MANSLOscl}
Consider the system \eqref{NL:auto}. If there exists a uniformly bounded metric $M(x) \in \rea^{n\times n}_{\succ 0}$ such that
\begin{equation}
\label{trans_contraction_scl}
\dot M + {\partial f(x) \over \partial x}^\top M + M{\partial f(x) \over \partial x} - \rho(x) f(x)f(x)^\top \prec 0
\end{equation}
for some scalar function lower bounded $\rho(x)>0$, then the system is orbitally asymptotically stable.
\end{lemma}

The interested reader may refer to \cite{MANSLOscl} for its convex representation. A benefit of Lemma \ref{lem:trans_contraction}, compared to Lemma \ref{lem:trans_koopman}, is that it does not require the prior knowledge of the orbit $\gamma$ to verify orbital stability of a given nonlinear system. In \cite{YIetal}, it was also suggested to use the bounded \emph{semi-definite} Riemannian metric $\mathbb{M}(x) \in \rea^{n\times n }_{\succeq 0}$ to verify the transverse contraction with the inequality
\begin{equation}
\label{ineq:trans}
\dot \bbm(x) + \bbm(x) {\partial f(x) \over \partial x}  + {\partial f(x) \over \partial x}^\top \bbm(x)  \prec - k \bbm(x),\quad k>0.
\end{equation}
Since $\bbm$ can be parameterized as $\bbm(x) = \Psi(x) P(x) \Psi^\top(x)$ with $P(x) \in \rea^{r\times r}_{\succ 0}$ and $\Psi \in \rea^{n\times r}$, if $\nabla \Psi_i = (\nabla \Psi_i)^\top$ for all $i=1,\ldots, r$, then the attractive orbit can be obtained as
$$
\gamma = \{x \in \rea^n ~|~ \psi(x) = 0\},
$$
with 
$$
\psi(x):= \int_0^1 (\Psi(sx))^\top x ds
$$
if $\psi(x)|_\gamma =0$ with $r=n-1$.

\subsection{Koopman Implies Transverse Contraction}
\label{sec:k2tc}

We are now ready to show that the Koopman approach for limit cycles implies transverse contraction, {\em i.e.}, verifying both \eqref{ineq:trans} and the conditions in Lemma \ref{lem:trans_contraction}.

\begin{theorem}
\label{koopman2trans_contraction}\rm
Consider the system \eqref{NL:auto} satisfying the assumptions in Lemma \ref{lem:trans_koopman}. Then, the system is transversely asymptotically contracting with respect to $\phi$, {\em i.e.}, there exist 
\begin{itemize}
    \item[(1)] a metric $\bbm(x) \succeq 0$ satisfying \eqref{ineq:trans} globally;
    \item[(2)] a metric $M(x) \succ 0$ satisfying \eqref{trans_contraction_scl} locally.
\end{itemize}
\end{theorem}
\begin{proof}
If there exists a Koopman mapping $\phi$ satisfying the conditions {\bf T1}-{\bf T2}, we can find a matrix $P \in \rea^{r\times r}_{\succ 0}$ with $r:= n-1$ satisfying the Lyapunov inequality
$$
P A + A^\top P \preceq - k P, \quad  k>0.
$$
Then, we construct the semi-definite Riemannian metric as
$
\bbm(x) = {\partial \phi(x) \over \partial x}^\top P {\partial \phi(x) \over \partial x}.
$
Following the similar procedure as in the proof of Theorem \ref{thm:1}, we have
$$
\dot \bbm + \bbm F + F^\top \bbm \preceq  - k \bbm ,
$$
thus verifying the first sufficient condition in \eqref{ineq:trans}.

We briefly summarize the construction of the transverse contraction metric to guarantee the conditions in Lemma \ref{lem:trans_contraction} locally. Since ${\Phi(x)}= {\partial \phi \over \partial x}(x)$ is full rank, invoking Wazewski theorem \cite[Ch. 9.3]{BER}, we know that the Jacobian completion of $\nabla \phi$ is solvable, {\em i.e.}, there exists a $C^\infty$ mapping $\Theta: \rea^n \to \rea^{1\times n}$ such that 
\begequ
\label{full_rank:N}
\det (N(x))  \neq 0, \quad 
N(x):= \col(\Phi(x), \Theta(x))
\endequ
in any contractible sets.\footnote{If $\Theta(x)$ is additionally integrable, {\em i.e.} ${\partial \theta \over \partial x} = \Theta$, then $\phi(x)$ and $\theta(x)$ represent the transverse and tangential coordinates ({\em i.e.} isochrons, or angular variable), respectively, for orbital stability analysis.} Then, we may select the positive definite metric $M(x)$ as
$$\small
M(x) = N(x)^\top \begmat{P & 0 \\ 0 & 1} N(x) = \Phi(x)^\top P \Phi(x) + \Theta(x)^\top \Theta(x),
$$
in which $P$ is the solution to the Lyapunov equation $PA + A^\top P + I =0$. According to the results in \cite[Thm. 4]{MANSLOscl}, it completes the proof. 
\end{proof}

\subsection{Transverse Contraction Implies the Koopman Condition}

Now let us show the converse result, {\em i.e.}, transverse contraction implies the Koopman conditions for limit cycles. We make the following assumption.

\begin{assumption}
\label{assumption:limit}\rm
Consider a limit cycle $\gamma \in \mathcal{X} \subset  \mathbb{R}^{n} $ with the transverse coordinate $\xi \in \rea^{n-1}$ and the tangential coordinate $\theta \in \mathbb{S}$ defined by 
$$
\begmat{\xi \\ \theta}
= \phi_1(x):= \begmat{\phi_\xi(x) \\ \phi_\theta(x)},
$$
satisfying
$$
\phi_\xi(x)\Big|_{x\in \gamma} =0.
$$
Additionally, $\nabla \phi_1(x)$ is full rank for $x\in \mathcal{X}$.
\end{assumption}

\begin{theorem}
\rm\label{prop:converse_limit}
Consider the system \eqref{NL:auto}, which admits a (non-trivial) limit cycle $\gamma$ satisfying Assumption \ref{assumption:limit}. If there exists a uniformly bounded matrix $P(x) \in \rea^{(n-1)\times (n-1)}_{\succ 0}$ such that the semi-definite Riemannian metric $\bbm(x) = \nabla \phi_\xi(x) P(x) (\nabla \phi_\xi(x))^\top $ satisfying \eqref{ineq:trans}, then there is a Koopman mapping $\phi:\rea^n \to \rea^{n-1}$ satisfying {\bf T1} and {\bf T2}.
\end{theorem}


\begin{proof}
From the existence of a limit cycle, there are a family of particular solutions $x_\star(t)$ invariant on $\gamma$, {\em i.e.}
$$
x_\star(t) = x_\star(t+T) \in \gamma, \quad \forall t\ge 0
$$
for some $T>0$. According to \cite[Proposition 11]{YIetal}, we obtain
$$
\lim_{t\to\infty}\|\phi_\xi(X(x,t)) \|_\gamma =0, \quad \forall x\in \mathcal{X}
$$
with $\|x\|_\gamma := \inf_{y \in \gamma} |x- y|$ and the $\xi$-system dynamics
\begin{equation}
\label{dot_xi}
\dot\xi = {\partial \phi_\xi \over \partial x } f(x) \Big|_{x= \phi_1^{-1}(\xi,\theta)}
\end{equation}
is contracting in terms of the assumption \eqref{ineq:trans} and has an equilibrium at the origin, where $\phi_1^{-1}$ is the inverse mapping of $\phi_1$. According to Theorem \ref{thm:conv_ltv}, there exists a transformation $\phi_2: \rea^{n-1} \times \rea \to \rea^{n-1}$ lifting \eqref{dot_xi} into\footnote{The angular variable $\theta$ plays the role of time $t$ in Theorem \ref{thm:conv_ltv}. Indeed, we may find a reversible function between $(\xi,\theta)$ and $(\xi,t)$; see for example the proof of \cite[Thm. 2.6]{LANMEZ}.}
\begin{equation}
\label{dot_zeta}
\dot \zeta = A(\theta) \zeta,
\end{equation}
where $A$ is a stable (time-varying) matrix.

Note that $\theta$ is defined on $\mathbb{S}$ (via normalization), as well as $\dot \theta|_{\gamma} \neq 0$, and thus $A(\theta)$ is a periodic, stable matrix. According to Floquet theorem, there is a periodic matrix $Q(\theta)$ such that the transformed coordinate $z = Q(\theta) \zeta$ has an LTI dynamics
$$
\dot z = A_Q z
$$
with $A_Q \in \rea^{(n-1)\times (n-1)}$ Hurwitz.

As a result, the Koopman mapping is given by the composite function
$$
\phi(x) := Q(\phi_\theta(x))\cdot \phi_2 \circ \phi_\xi(x),
$$
completing the proof.
\end{proof}

In the above result, we assume that we already know the transformation to get transverse and tangential coordinates. Indeed, the existence of such a transformation in its domain of attraction was shown in \cite{BYR}. The converse result in Theorem \ref{prop:converse_limit} resembles the necessary part of \cite[Proposition 3]{MAUMEZ}.

\section{Discussions}
\label{sec6}

\subsection{Extension to Stabilization Problems}
\label{sec51}

We have showed the equivalence between the Koopman and contraction approaches when analyzing stability of equilibria, trajectories, and limit cycles. On the other hand, it is of practical interest to study if such equivalence holds in constructive problems.

Let us consider the controlled system model
\begin{equation}
\label{NLCS}
\dot x = f(x,u),
\end{equation}
with the control input $u \in \rea^m$.

\begin{definition}\label{def:ccm}\rm 
({\em control contraction metric}) Consider the controlled system \eqref{NLCS} with Jacobians $F(x,u):={\partial f(x,u)\over \partial x}$ and $G(x,u):={\partial f(x,u) \over \partial u}$. If we can find a uniformly bounded metric $M(x)$ and a function $K(x)$ satisfying
\begin{equation}
\label{ccm}
\begin{aligned}
	\dot M + MF+F^\top M + MGK+(GK)^\top M &\prec 0,
\end{aligned}
\end{equation}
then we call $M(x)$ a (strong) control contraction metric.
\end{definition}

For affine-in-input systems, the CCM may be written in a more compact way independent of $K(x)$. The interested reader may refer to \cite{MANSLO} for additional details.


Similar to the autonomous case, for the controlled system \eqref{NLCS} we may define the Koopman mapping $\phi$ satisfying 
\begin{equation}
\label{pde:ctrl}
    {\partial \phi \over \partial x}(x) f(x,u) = A\phi(x) + B u, \; \forall u \in \rea^m,
\end{equation}
with matrix $B \in \rea^{n\times m}$, which transforms the dynamics into
\begequ
\label{z:lti}
\dot z = Az + Bu, \; z(0)= \phi(x(0)).
\endequ
We have the following.

\begin{proposition}
\rm \label{prop:ccm}
Consider the controlled system \eqref{NLCS}, which has a Koopman mapping $\phi$ satisfying {\bf C1} and the PDE \eqref{pde:ctrl}. If the lifted LTI system \eqref{z:lti} is stabilizable, then the given controlled system admits a CCM.
\end{proposition}
\begin{proof}
The stabilizability of the LTI system \eqref{z:lti} is equivalent to the existence of a matrix $P = P^\top \succ 0$ and a feedback gain matrix $\bar K$ such that
\begin{equation}
\label{lyp_ctrl}
P(A+B\bar K) + (A+ B\bar K)^\top P \prec 0.
\end{equation}

Now we define the feedback controller 
$
u = \bar K\phi(x),
$
and substitute into the PDE \eqref{pde:ctrl}, obtaining
$$
\Phi(x) f(x, \bar K\phi(x)) = (A + B\bar K) \phi(x).
$$
Calculate its partial derivative with respect to $x$, yielding
$$
\dot \Phi(x) + \Phi(x) [F+ G\bar K \Phi(x)] = (A+ B\bar K) \Phi.
$$
By selecting the metric
$$
M(x):= \Phi(x)^\top P \Phi(x)
$$
and the mapping $K(x):= \bar K \Phi(x)$, we have
$$
\begin{aligned}
 & \quad \Phi^\top [P(A+B\bar K) + (A+B \bar K)^\top P]\Phi
 \\
 = & \quad 
 \dot M + MF + F^\top M + MGK + (GK)^\top M 
 \\
\prec& \quad  0,
\end{aligned}
$$
which exactly coincides with the strong CCM in \eqref{ccm}.
\end{proof}

\begin{remark}
\rm
Although Koopman and CCM methods for control design are equivalent in certain cases, they differ in their implementation and each have their advantages. In particular, CCM methods admit a convex search for the metric and the differential controller $K$, however there is no guarantee that the obtained $K(x)$ is integrable and some online computation may be required to realize a specific controller \cite{LEUMAN,WANMAN}. On the other hand, the joint search for observables $\phi$ and controller in the Koopman framework is non-convex, but if successful there always admits an explicit controller $u=K\phi(x)$.
\end{remark}

\begin{remark}
\rm
In general, the lifted $z$-dynamics may contain a state-dependent input matrix rather than constant $B$ \cite{HUAetal,GOSPAL}. Here, we use the constant input matrix assumption to simplify the presentation, as is popular in data-driven Koopman-based methods for control, {\em e.g.} \cite{PROetal}, \cite[Ch. 5]{MAUetal}. A straightforward extension is the case with state-dependent $B(x)$, and in many cases we may parameterize $u= \alpha(x,v)$ and regard $v$ as the new input in order to get the form $\dot z = Az + Bv$. On the other hand, noting that the input $u$ can be arbitrary, the PDE \eqref{pde:ctrl} imposes implicitly the affine-in-control assumption of $f(x,u)$, making it relatively restrictive. It is also promising to be extended to more general cases.
\end{remark}


\begin{remark}
It is interesting to study the converse claim for stabilization problems. For an affine-in-control system $\dot x = f(x) + gu$, if there exists a feedback law $u = k(x) +v(t)$ making the closed-loop system contracting, then following the main results of the paper, there is a mapping $\phi(x)$ which lifts the closed-loop into a stable linear system, and also lifts the open loop into a stabilizable dynamics. Unfortunately, it is not a \emph{bona fide} converse, since a CCM controller may be not integrable to a function $k(x)$ with $K(x) = {\partial k \over\partial x}(x)$.
\end{remark}

\subsection{Relations to KKL Observers}
\label{sec62}

The KKL observer, also known as nonlinear Luenberger observer, is a state estimation approach for general nonlinear systems \cite{KAZKRA}, which consists in mapping the given nonlinear dynamics 
$$
\dot x = f(x,t) , \quad y= h(x,t)
$$
to a linear system
$$
\dot z = Az + Dy
$$
with $A$ Hurwitz, in which the output $y$ is viewed as an available signal. This step relies on solving the PDE
\begin{equation}
\label{pde:kklo}
{\partial T \over \partial t} + {\partial T \over \partial x}(x) f(x,t) = AT(x,t) + Dh(x,t).
\end{equation}
It is a quite general framework since it allows a nonlinear output injection $h(x,t)$ to appear in the transformed $z$-coordinate. The associated PDE \eqref{pde:kklo} is always solvable under some mild technical assumptions, and when applying KKL observer, the main task is to guarantee the coordinate change $x\mapsto z= T(x,t)$ injective, thus admitting an inverse mapping. It has been shown that, the general notion -- \emph{backward distinguishability} -- is sufficient to guarantee the injectivity \cite{ANDPRA,BER}. Some remarks about the relation between the Koopman method and KKL observers are in order.
\begin{itemize}
\item[1)] In the proof of the converse results in Sections \ref{sec4}, we parameterize the coordinate change $\phi: (x,t)\mapsto z$ into two parts, {\em i.e.}, $\phi(x,t) = x +T(x,t)$, and then we obtain \eqref{pde:kkl} for autonomous systems and \eqref{pde3} for time-varying systems. It exactly coincides with the PDE involved in KKL observers for the auxiliary system
\begin{equation}
\label{aux:aut}
\dot x = f(x,t), \quad y = H(x,t).
\end{equation}
with $H(\cdot)$ the high-order remainder terms defined in Section \ref{sec4}. Using the constructive solution in KKL observers, we get a feasible solution to the PDE in the Koopman methods for nonlinear contracting systems.
\item[2)] The main difference, between the Koopman method and KKL observers, relies on how to guarantee the injectivity. For the latter, we need to show the injectivity of $T(x,t)$ by exploiting the backward distinguishability of the given system. However, for the Koopman method, we need to show the injectivity of $\phi(x,t) = x + T(x,t)$ rather than $T(x,t)$ itself. Our key idea is to utilize its ``identity part''.

\item[3)] In KKL observers, it is suggested to use excessive coordinates, generally more than $(2n+1)$-dimensional\footnote{For elements selected in $\mathbb{C}$, the dimension is not less than $n+1$.}, in order to get the injectivity of the mapping $T$ defined in \eqref{T(x,t)}. On the other hand, excessive coordinates are widely adopted for the Koopman operator in the learning literature. It is claimed in \cite[Thm. 3]{KORMEZ} that by choosing sufficiently rich orthonormal bases, the solution of a least square approximates the Koopman operator with guaranteed accuracy.
\end{itemize}


\begin{remark}\rm 
(\emph{Extensions to control design}) An interesting open problem is the converse claim of Proposition \ref{prop:ccm}, which is related to the dual problem of KKL observer. A straightforward idea is to immerse the controlled system \eqref{NLCS} into
$$
\dot z = Az + h(x,u), \qquad z=T(x)
$$
for some function $h(x,u)$ to be determined, with $A$ stable. In this step, we have the same PDE as the one in KKL observers. Then, the stabilization task generally contains two tasks:
\begin{itemize}
    \item[-] finding a function $\alpha(x)$ to solve the algebraic equation $h(x,\alpha(x))=0$;
    \item[-] the function $h(x,t)$ should guarantee that the system $\dot x = f(x,u),\; y=h(x,u)$ is backward distinguishable.
\end{itemize}
Then, the feedback law $u=\alpha(x)$ stabilizes the system at some equilibrium. 
\end{remark}

\section{Examples}
\label{sec7}
\subsection{A 2-Dimensional System}
\label{sec:example1}

Consider the nonlinear autonomous system \cite{BRUetal}
\begequ
\label{example1}
\dot x  = \begmat{
 - x_1 \\ - x_2 + x_1^2
 }
\endequ
with $x\in \rea^2$. The differential dynamics of \eqref{example1} is given by
$
\delta \dot x =F(x) \delta x,
$
with the Jacobian
$$
F(x)= \begmat{-1 & 0 \\ 2x_1 & -1}.
$$
By selecting the metric $M(x) = \diag(1+4x_1^2, 1)$, we may verify that the given system is contracting due to
$$
\dot M(x) + M(x)F(x) + F(x)^\top M(x) 
=
\begmat{-2 -16 x_1^2 & 2x_1 \\ 2x_1 & -2} \prec 0.
$$

This example will be used to verify the converse result in Section \ref{sec32}. The system has an equilibrium at the origin, {\em i.e.} $x_\star = 0$, and then following the proof of Theorem \ref{thm:inv1} we have
$
F_\star = F(x_\star) = \diag(-1,-1).
$
A feasible transformation is $\phi(x) = x+T(x)$, with $T(x)$ the solution of \eqref{T}. The flow $X(x,t)$ of the given nonlinear system \eqref{example1} can be obtained as
$$
X(x,t) 
=
\begmat{e^{-t} x_1
\\ 
e^{-t} x_1^2 + e^{-t} x_2 - e^{-2t}x_1^2
},
$$
and the high-order remainder term is
$
H(x) = \col(0 , -x_1^2).
$
We consider the modified system \eqref{modified} in the the open set $\mathcal{X}:=\{0< x_1<1\}$, and the backward flow of $\breve X_1$ is given by
$$
\breve X_1(x,t) = \left\{
\begin{aligned}
& e^{-t} x_1, ~& \ln x_1 \le t \le 0 
 \\
& 1,~ &  t < \ln x_1.
\end{aligned}
\right.
$$
Then, it yields
$$
\begin{aligned}
T(x) & = \int_0^{+\infty} \exp(  F_\star s) H(\breve X(x,-s))ds
\\
& = \bigintssss_{-\infty}^0 \begmat{e^s & \\ & e^s} \begmat{0 \\ - \breve X_1(x,s)^2} ds
\\
& =
- \bigintssss_{\ln x_1}^0
 \begmat{0 \\ e^{s} (e^{-s} x_1)^2} ds
 -
 \bigintssss_{-\infty}^{\ln x_1}
 \begmat{0 \\ e^{s} 1^2} ds
\\
& 
=
\begmat{0 \\ - 2x_1 + x_1^2}
\end{aligned}
$$
thus
$$
\phi(x) = \begmat{x_1 \\ -2x_1 + x_1^2 + x_2}.
$$
It is straightforward to verify {\bf C1} and {\bf C2}, {\em i.e.}, ${\partial \phi \over \partial x}(x)$ is full rank, and
$$
\dot{\aoverbrace[L1R]{\phi(x)}}
=
\begmat{-x_1\\  2x_1 - x_1^2 - x_2}
=
F_\star \phi(x).
$$
We underline here that these conditions hold globally in $\rea^2$. It is interesting to compare the above result with the one in \cite{BRUetal}, in which the given system is immersed into a three-dimensional LTI system by introducing excessive coordinates.

\subsection{Limit Cycle in Induction Motor}

In this section, we use the model of an induction motor to illustrate the result in Section \ref{sec:k2tc}, {\em i.e.}, the Koopman condition implies transverse contraction for limit cycles. The normalized model in the fixed frame is given by \cite{ORTbook}
\begin{equation}
\label{im}
\begin{aligned}
\dot{\psi}_r & = - R\phi_r + \omega \mathbb{J} \psi_r + Ru
\\
\dot{\omega} & = u^\top \mathbb{J} \psi_r  - \tau_L, 
\quad 
\mathbb{J}:=\begmat{0 & - 1\\ 1 & 0},
\end{aligned}
\end{equation}
with the flux $\psi_r \in \rea^2$, angular speed $\omega \in \rea$, the load torque $\tau_L \in \rea$, the resistance $R>0$, and the stator current $u\in \rea^2$ as input. A basic control problem is to regulate the norm $|\psi_r|$ and the speed $\omega$ to some constants $\beta_\star$ and $\omega_\star$, respectively. To address this, the classical field-oriented control (FOC), which was introduced in the drives community in 1972 \cite{BLA}, is now the \emph{de facto} standard in all high-performance applications of electric drives. With zero load $\tau_L$, the FOC takes the form
\begin{equation}
\label{FOC}
u = \Big[ \beta_\star I_2 - {k \over \beta_\star} (\omega-\omega_\star) \mathbb{J} \Big] {\psi_r \over |\psi_r|}, \quad k>0.
\end{equation}

We assume that -- with loss of generality -- all constant parameters and gains being one to simplify the presentation. In \cite[Proposition 4]{YIORT} it shows that the FOC \eqref{FOC} achieves \emph{almost global} orbital stabilization. The following gives an alternative proof from the perspective of Koopman operator.

\begin{proposition}
The induction motor model \eqref{im} in closed loop with \eqref{FOC} satisfies the assumptions in Lemma \ref{lem:trans_koopman}, and thus has an attractive limit cycle.
\end{proposition}
\begin{proof}
For consistency of notations, we define the state $x = [\psi_r^\top, \omega]^\top \in \rea^3$ with $x_p = [x_1,x_2]^\top$. Then, the closed loop is given by
\begin{equation}
\label{IM:closed_loop}
\dot x 
= 
\begmat{
1 - |x_p| & - x_3|x_p| & {x_2\over |x_p|}
\\
* & 1 - |x_p| & - {x_1 \over |x_p|}
\\
* & * & -|x_p|
}
\nabla \calh
\end{equation}
with ``*'' presenting some skew-symmetric elements, and the Hamiltonian $\calh(x) = \hal |x_p|^2 +\hal (x_3-1)^2$. Now, let us verify the assumptions for the closed loop \eqref{IM:closed_loop}. It is straightforward to verify that the Jordan curve
$
\gamma = \{x\in \rea^3 : |x_p| =1, x_3 = 1\}
$
is forward invariant, and there is no equilibrium on the set. Let us select the Koopman eigenfunctions as
$$
\begin{aligned}
\phi_1(x) = 1-{ 1\over |x_p|}
, \quad
\phi_2(x)  = {
x_3  - 1 \over |x_p|} .
\end{aligned}
$$
The Jacobian ${\partial \phi \over \partial x}(x) $ is full rank almost globally, except the zero-Lebesgue measurable set $\Omega_s:=\{x\in \rea^3 : x_1=x_2=0\}$, verifying the condition {\bf T1}. On the other hand, we have 
$$
\dot{\aoverbrace[L1R]{\phi(x)}} ~=~ - \phi(x),
$$
thus verifying {\bf T2}, {\em i.e.}, the PDE \eqref{pde:new} with the Hurwitz matrix $A = - I_2$. Invoking Lemma \ref{lem:trans_koopman}, we complete the proof.
\end{proof}

According to Theorem \ref{koopman2trans_contraction}, there exists a transverse contraction metric $M(x)$ satisfying the inequality \eqref{trans_contraction_scl}. To see this, we may select the matrix-valued function 
$
\Theta(x) = \begmat{-x_2 & x_1 & 0}
$
to guarantee the full-rank condition \eqref{full_rank:N} almost globally except the singular set $\Omega_s$. By selecting $P= I_2$ we have $PA+ A^\top P \prec 0$, and as a result the matrix 
$$
M(x) = \mbox{$\big[{\partial \phi \over \partial x}(x)\big] $} P \mbox{$\big[{\partial \phi \over \partial x}(x)\big]^\top $}  + \Theta(x)^\top \Theta(x)
$$
is positive definite except $\Omega_s$. 
\begin{figure}[!htp]
    \centering
    \includegraphics[width=0.7\linewidth]{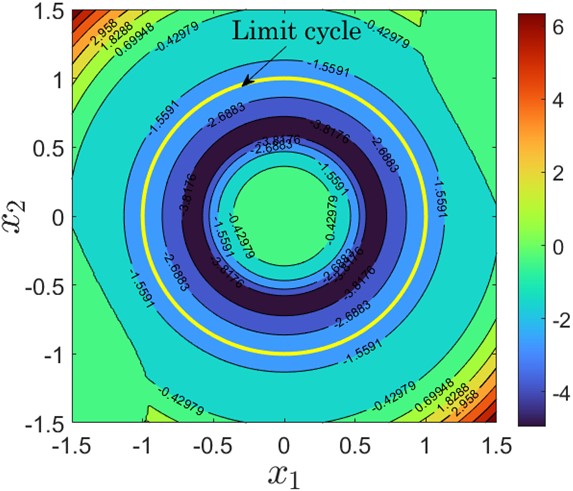}
    \caption{The largest real parts of the eigenvalues of $\dot M + {\partial f \over \partial x}^\top M  + M {\partial f \over \partial x} - \rho f f^\top$ via fixing $x_3 = 1$}
    \label{fig:limit}
\end{figure}
In Fig. \ref{fig:limit}, we draw the largest real parts of the eigenvalues of $\dot M + {\partial f \over \partial x}^\top M  + M {\partial f \over \partial x} - \rho f f^\top$ with $\rho = 50$, and note that for the purpose of visualization we fix $x_3$ at its steady-state value $x_3=1$. Clearly, $M(x)$ verifies the transverse contraction condition \eqref{trans_contraction_scl} in the neighborhood of the limit cycle.

\begin{remark}
In the above analysis, we use explicitly the equation of the limit cycle $\gamma$ to illustrate the theoretical results in Section \ref{sec:k2tc}. However, it is unnecessary to write down the analytic form of $\gamma$ via transverse contraction to find a limit cycle and prove its stability \cite{MANSLOscl}. This is particularly important for showing robustness of a limit cycle, since typically a limit cycle location will change as parameters change.
\end{remark}

\subsection{Learning Contraction Metrics from Data}

It is widely recognized that the Koopman operator provides a powerful tool to learn dynamical models of nonlinear systems from data \cite{KORMEZ,MAMetal,MACetal,MAUetal}. Based on the equivalence between contraction and Koopman approaches studied in the paper, it can provide a novel approach to learn contraction metrics for stable nonlinear systems from pure trajectory data. Learning contraction metrics has recently been explored in the context of robust motion planning and control, {\em e.g.} \cite{SUNetal,CHOetal,TSUetal}, and our results dramatically simplify this problem to one of linear system identification. 

{\em Problem.} ({\em Data-driven contraction metrics learning}) Consider a contracting system \eqref{NL:auto} and assume that only a set of state trajectory data $\{\tilde x(k)\}_{k=0}^T$ and the derivatives $\{\dot{\tilde x}(k)\}_{k=0}^T$ are available\footnote{Sometimes the time derivative of systems state is not available from sensors. We may apply stable filters to obtain new regressors.}, and the vector field $f(x)$ is unknown. Our task is to estimate the contraction metric $M(x)$ using the information of data only.

According to Theorem \ref{thm:inv1}, if the system is contracting we may always find a Koopman mapping $\phi$ to get a lifted LTI system \eqref{dot_z} with $A$ Hurwitz, and $\phi$ is left invertible. A contraction metric of the given system is
$
M = \nabla \phi P (\nabla \phi)^\top
$
with $P \succ 0$ the solution of the Lyapunov equation. Based on this intuitive idea, data-driven contraction metrics learning is translated to the problem of estimating the Koopman mapping $\phi$ and the stable matrix $A$ from data.

First, selecting the basis function 
$
w(x) \in \rea^N,
$
with $N \in \mathbb{N}$, and here $N>n$ is usually selected sufficiently large in order to get high accuracy. Then, we are able to parameterize the Koopman mapping as
\begequ
\phi(x) = \theta^\top  w(x), \quad \theta \in \rea^{N \times n}.
\endequ
Now the PDE \eqref{pde:new} becomes
\begequ
\label{theta_w}
\theta^\top \dot{\aoverbrace[L1R]{w(x)}}   = A\theta^\top w(x)
\endequ
with $\dot w (x) = W(x) \dot x$ and $W(x):={\partial w \over \partial x}(x)$.
Our target, then, becomes searching for $\theta$ and $A$ in the optimization problem
\begequ
\label{min1}
\begin{aligned}
\underset{\theta,A}{\min} &\quad \sum_{k=1}^{n_k}\left| \theta^\top Y_1(k) - A \theta^\top Y_2(k) \right|^2
\\
\mbox{s.t.}& \quad \rank{\theta} = n
\end{aligned}
\endequ
with the measurable vectors $Y_1(k) := W(\tilde x(k))\dot{\tilde{x}}(k)$  and $Y_2(k) := w(\tilde x(k))$.

We consider the following assumption. If we select sufficient numbers of independent basis functions $(N\gg n)$, all the elements of $W(x)f(x)$ can be approximately linearly represented by the bases $w(x)$, {\em i.e.}
$
[W(x)f(x)]_{i} = \rho^\top_{i} w(x) + O(\cdot), \; \forall i
$
for some vectors $\rho_{i}$, with some (tiny) high-order term $O(\cdot)$. Since $\theta$ is full rank, we can always find another matrix $\theta_\bot \in \rea^{N\times (N-n)}$ such that $[\theta, ~\theta_\bot]$ is full rank. Hence,
\begin{equation}
\label{theta_bot}
\begmat{\theta^\top \\ \theta_\bot^\top}  \dot{\aoverbrace[L1R]{w(x)}} 
=
\begmat{A\theta^\top \\
\theta_\bot^\top\begmat{\rho_1 & \cdots & \rho_{N-n}}^\top 
}
w(x) + O(\cdot),
\end{equation}
and then we have
$$\footnotesize
\dot{\aoverbrace[L1R]{w(x)}}  = \tilde{A} w(x) + O(\cdot),
~
\tilde{A}:= \begmat{\theta^\top \\ \theta_\bot^\top}^{-1}
\begmat{A\theta^\top \\
\theta_\bot^\top \begmat{\rho_1 & \cdots & \rho_{N-n}}^\top
}.
$$
Now we may estimate the matrix $\tilde{A}$ using the bases $w(x)$ as observables to identify the dynamical model. This is the underlying reason why, in many works, only (sufficiently high-dimensional) lifted system matrix $\tilde{A}$ is learned -- rather than estimating the Koopman mapping $\phi$ and matrix $A$ simultaneously -- but it still provides satisfactory identification results. Then we may use the least square solution to estimate $\tilde{A}$ as
$
\hat{A} = \tilde Y_1 \tilde Y_2^\top (\tilde Y_2 \tilde Y_2^\top)^{-1}
$
with $\tilde Y_1 := \begmat{Y_1(1) & \cdots & Y_2(k)}$ and $\tilde Y_2 := \begmat{Y_2(1) & \cdots & Y_2(k)}$. After solving $\hat Q \hat A + \hat A^\top \hat Q = -I_N$, we learn a contraction metric 
$
M = {\partial w \over \partial x}^\top \hat Q {\partial w\over \partial x}.
$

{\em Numerical example.} Let us consider the system in Section \ref{sec:example1}. But, now we only have the dataset $\{\tilde x(k),\dot{\tilde x}(k)\}_{k=0}^T$ rather than the model itself. The basis function is selected as polynomials $w(x)= \col(x_1, x_2, x_1x_2, x_1^2, x_2^2)$. 
The trajectory data are collected under the sampling time 0.2s from the initial condition $[-2 ~~1]^\top$. Following the above, the estimated contraction metric $ M(x) \in \rea^{2\times 2}$ is obtained as
$$\small
 M(x) = 
W(x)^\top
\begmat{   0.45 &    0.23&  0.03& -0.04&  0.04\\
 *&    0.44& - 0.01& -0.13& -0.04\\
*& * &   0.23&  0.1&  0.12\\
*&  *&  *&  0.29&  0.07\\
*&  *&   *& *&  0.23}
W(x).
$$
We give in Fig. \ref{fig:simulation} the largest real part of the eigenvalues of $\dot M + MF + F^\top M$, which is negative definite -- showing that the Riemmanian metric is geodesically decreasing -- and the smallest real part of the eigenvalue of $M$, which is positive definite. Hence, the obtained estimation of $M$ is qualified as a contraction metric. It should be kept in mind that, for a given contracting system, the contraction metric is not unique.

\begin{figure*}[htp!]
    \centering
\subfloat[The largest (real parts of) eigenvalue of $(\partial_f M + MF + F^\top M)$]
{
\includegraphics[width=0.27\textwidth]{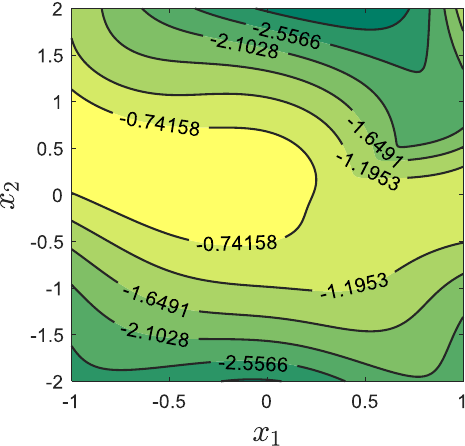}
\label{fig:3}
}
\quad
\subfloat[The smallest (real parts of) eigenvalue of $M$]
{
\includegraphics[width=0.27\textwidth]{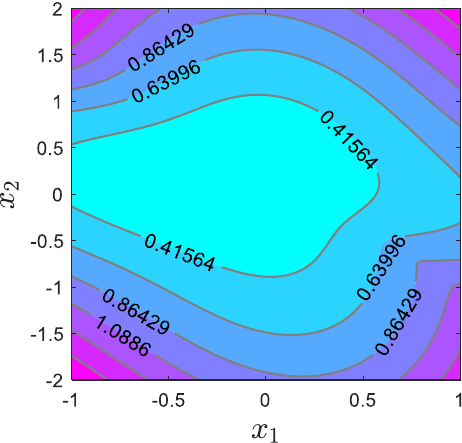}
\label{fig:4}
}
    \caption{Learned contraction metric from trajectory data}
    \label{fig:simulation}
\end{figure*}

\begin{remark}\rm 
We use this example to show an interesting byproduct of the main results in the paper, {\em i.e.}, the Koopman approach provides an efficient methodology to learn contraction metrics from trajectory data. Some remarks are in order.
\begin{itemize}
\item By using sufficiently high-dimensional basis observables, we try to learn the matrix $\tilde{A}$ in the $w$-coordinate, which admits a modelling error from the high-order approximation term. If we come back to \eqref{min1} and search for $A$ and $\phi$, it is promising to improve accuracy. It, however, is a non-convex optimization problem.

\item The above procedure is sensitive to the quantity and coverage of data, with the obtained contraction metric local to the trajectory data. A possible approach is to impose stability constraint of the matrix $A$ during identification, {\em e.g.} via the methods of \cite{Lacy03, maciejowskiGuaranteedStabilitySubspace1995, Umenberger:2018} towards robust learning algorithms. Recently, we have extended these equivalence results from continuous-time to discrete-time models in \cite{FANetal}, leading to a nonlinear system identification approach with the \emph{model stability constraint}. We refer the reader to see more practical examples therein.
\end{itemize}
\end{remark}

\section{Concluding Remarks}
\label{sec8}


In this paper, we study the connections among the Koopman method, contraction analysis and KKL observers. First, we show that the conditions in the Koopman method implies contraction of a given autonomous or time-varying system by substitution of differential equations. Later, the converse results for both equilibria and limit cycles are proven under mild technical assumptions, the success of which relies on the interesting observation that the Koopman method enjoys the same PDE as the one in KKL observers. We also studied the nonlinear controlled systems, showing that stabilizability of the lifted linear systems implies the existence of a CCM of the original nonlinear system. In terms of the above results, it is reasonable to expect a further integration of these methods as a systematic constructive tool in the near future.

\appendix
\subsection{Definition of Transverse Contraction}

The concept of transverse contraction was proposed in \cite{MANSLOscl} for analysis of limit cycles. In \cite{YIetal}, it was generalized to deal with observer design.

\begin{definition}
\label{def:trans_contraction}\rm 
The forward complete system \eqref{NL:auto} is said to be transversely contracting with rate $\lambda>0$ (or transversely asymptotically contracting) with respect to $\psi: \rea^n \to \rea^ n$ ($1\le r \le n$), if for any pair $(x_a,x_b)\in \rea^n \times \rea^n$, we have
$$
|\psi(X(x_a,t)) - \psi(X(x_b,t))| \le e^{-\lambda t} b(x_a,x_b)
$$
form some function $b\ge 0$ with $b(x,x)=0$ or
$$
|\psi(X(x_a,t)) - \psi(X(x_b,t))| \le \kappa(|x_a - x_b|,t)
$$
for transverse asymptotic contraction, with $\kappa $ of class $\mathcal{KL}$.
\end{definition}

The key difference between transverse contraction and horizontal contraction \cite{FORSEP} is that the they are defined on state-space and tangent space, respectively. When adopting transverse contraction to the stability analysis of limit cycles \cite{MANSLOscl}, it becomes the existence of a Finsler-Lyapunov function $V(x,\delta x)$ satisfying
$$
{\partial V(x,\delta x) \over \partial x} f(x) + {\partial V(x,\delta x ) \over \partial \delta x} {\partial f(x) \over \partial x}\delta x \preceq -\lambda V(x,\delta x),
$$
for all non-zero $\delta x$ such that ${\partial V \over \partial \delta x}f(x) =0$.

\section*{Acknowledgments}
The authors would like to express their gratitude to three reviewers for their thoughtful comments that helped improve its clarity, as well as Reviewer 5 of the conference version \cite{YIetal2} for clarification of Remark \ref{remark:C} and Proposition \ref{prop:koopman}.

\bibliographystyle{abbrv}
\bibliography{ref_koopman}

\begin{IEEEbiography}[{\includegraphics[width=1.1in,height=1.25in,clip,keepaspectratio,,clip,trim={.4in .8in .4in 0.4in}]{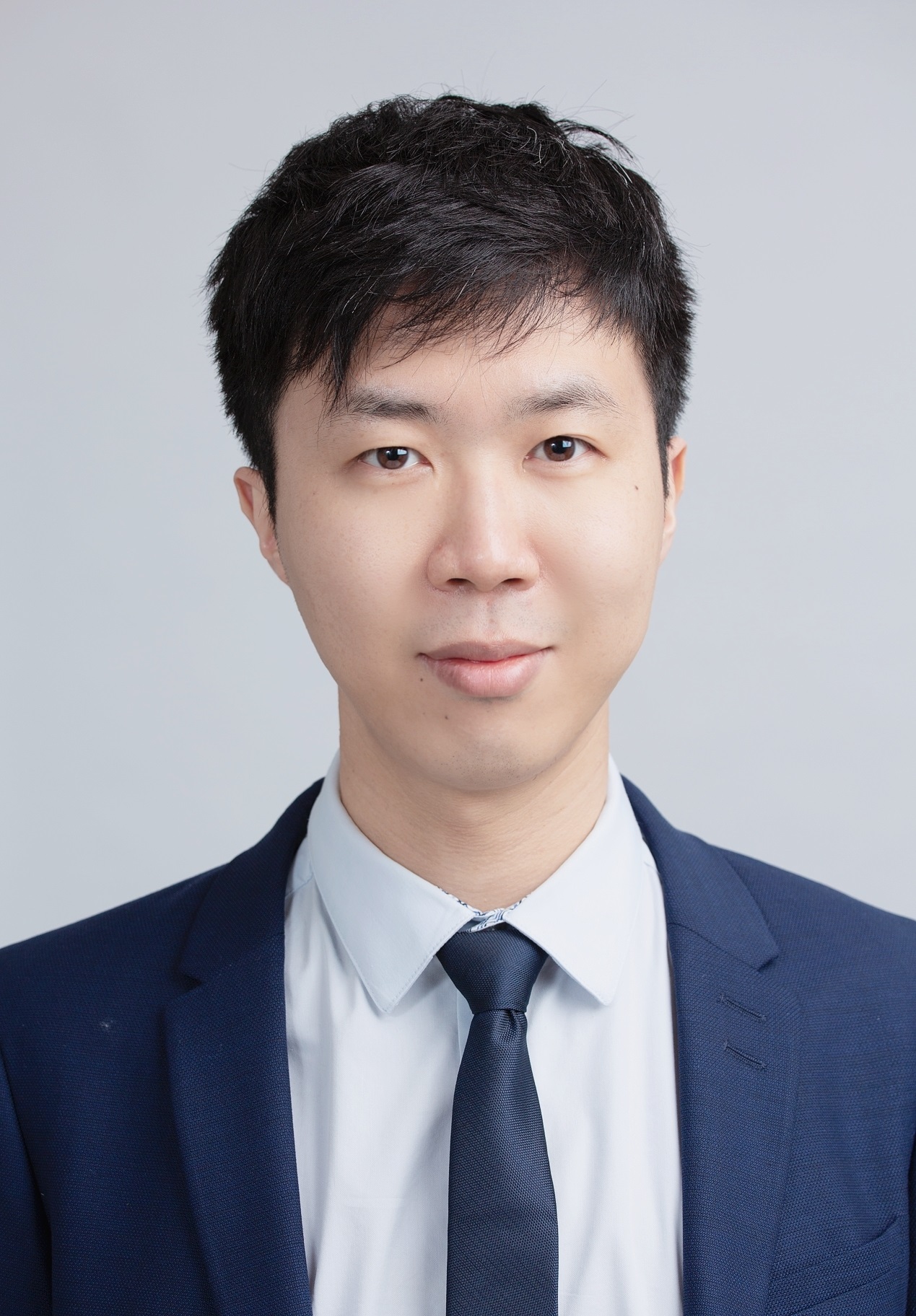}}]{Bowen Yi} 
obtained his Ph.D. degree in Control Engineering from Shanghai Jiao Tong University, China in 2019. From 2017 to 2019 he was a visiting student at Laboratoire des Signaux et Syst\`emes, CNRS-CentraleSup\'elec, Gif-sur-Yvette, France. He has held postdoctoral positions in Australian Centre for Robotics, The University of Sydney, Australia (2019 - 2022), and the Robotics Institute, University of Technology Sydney, Australia (Sept. 2022 - Oct. 2023). He is currently an Assistant Professor with the Department of Electrical Engineering, Polytechnique Montr\'eal, Canada. His research interests involve nonlinear systems and robotics. He received the 2019 CCTA Best Student Paper Award from the IEEE Control Systems Society for his contribution on sensorless observer design.
\end{IEEEbiography}
%

\begin{IEEEbiography}[{\includegraphics[width=1.1in,height=1.25in,clip,keepaspectratio]{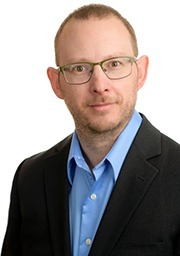}}]{Ian R. Manchester}
(M$'$18) received the B.E. (Hons 1) and Ph.D. degrees in Electrical Engineering from the University of New South Wales, Australia, in 2002 and 2006, respectively. He has held research positions at Ume\aa\ University, Sweden, and Massachusetts Institute of Technology, USA. In 2012 he joined the faculty at the University of Sydney, where he is currently Professor of Mechatronic Engineering, Director of the Australian Centre for Robotics (ACFR) and Director of the Australian Robotic Inspection and Asset Management Hub (ARIAM). His research interests include optimization and learning methods for nonlinear system analysis, identification, and control, and applications in robotics and biomedical engineering. He is an Associate Editor for IEEE Control Systems Letters and has previously served as an Associate Editor for IEEE Robotics \& Automation Letters.
\end{IEEEbiography}


\end{document}